\g@addto@macro \normalsize {%
 \setlength\abovedisplayskip{5pt plus 2pt minus 2pt}%
 \setlength\belowdisplayskip{5pt plus 2pt minus 2pt}%
}
\newcommand\myshade{85}
\colorlet{mylinkcolor}{JungleGreen} 
\colorlet{mycitecolor}{YellowOrange}
\colorlet{myurlcolor}{violet}
\newsavebox\myboxA
\newsavebox\myboxB
\newlength\mylenA
\newcommand*\xoverline[2][0.75]{%
    \sbox{\myboxA}{$\m@th#2$}%
    \setbox\myboxB\null
    \ht\myboxB=\ht\myboxA%
    \dp\myboxB=\dp\myboxA%
    \wd\myboxB=#1\wd\myboxA
    \sbox\myboxB{$\m@th\overline{\copy\myboxB}$}
    \setlength\mylenA{\the\wd\myboxA}
    \addtolength\mylenA{-\the\wd\myboxB}%
    \ifdim\wd\myboxB<\wd\myboxA%
       \rlap{\hskip 0.5\mylenA\usebox\myboxB}{\usebox\myboxA}%
    \else
        \hskip -0.5\mylenA\rlap{\usebox\myboxA}{\hskip 0.5\mylenA\usebox\myboxB}%
    \fi}
\def\thm@space@setup{%
  \thm@preskip=1.4ex 
  \thm@postskip=1.4\thm@preskip 
}
\newcommand{\edge}{\textit{Eg}}
\newcounter{myexample}[section]
\newcounter{firstexample}[section]
\renewcommand\paragraph{%
   \@startsection{paragraph}{4}{0mm}%
      {.25\baselineskip}%
      {-\baselineskip}%
      {\bfseries}}
\def\old@comma{,}
    \old@comma\discretionary{}{}{}%
\newcommand{\observations}[1]{\subsubsection*{Observations:}}
\newcommand{\vocabulary}[1]{\subsubsection*{Vocabulary and Symbols:}}
\newcommand{\rules}[1]{\subsubsection*{Rules:}}
\newcommand{\relations}[1]{\subsubsection*{Relations:}}
\newcommand{\programname}[1]{\subsubsection*{Program Name:}}
\newcommand{\tX}{\xoverline{X}}
\newcommand{\tuple}[1]{\bar{#1}}
\newcommand{\tu}{\tuple{u}}
\newcommand{\tconst}{\bar c}   
\newcommand{\blankk}
    {\raisebox{0.8ex}{-}\kern-0.4em b} 
\newcommand{\col}{\colon}
\newcommand{\quotes}[1]{\lq\lq#1\rq\rq}         
\newcommand{\qif}{\leftarrow}                   
\newcommand{\true}{\textit{true}\xspace}
\newcommand{\tpl}[1]{\bar{#1}}				
\newcommand{\query}[2]{#1 \qif #2} 		
\newcommand{\eat}[1]{}
\newcommand{\A}{\mathcal{A}} 
\newcommand{\C}{\mathbf{C}} 
\newcommand{\D}{\mathcal{D}}
\renewcommand{\O}{\mathcal{O}} 
\renewcommand{\S}{\mathcal{S}}
\newcommand{\defterm}[1]{\mbox{\underline{\it\smash{#1}\vphantom{\lower.1ex\hbox
{x}}}}}
\newcommand{\la}{\leftarrow{}}
\newcommand{\incl}{\subseteq}
\newcommand{\set}[1]{\{#1\}}                      
\newcommand{\tup}[1]{\langle #1\rangle}            
\newcommand{\DP}{\textrm{DP}}
\newcommand{\NP}{\textrm{NP}\xspace}
\newcommand{\coNP}{\textsc{co-NP}\xspace}
\newcommand{\wrt}{wrt~}
\newcommand{\constantmath}[1]{\mathit{#1}}
\newcommand{\pupil}{\textit{pupil\/}}
\newcommand{\ccode}{\constantmath{code\/}}
\newcommand{\class}{\textit{class\/}}
\newcommand{\john}{\constantmath{john}}
\newcommand{\learns}{\textit{learns\/}}
\newcommand{\MCS}{\textsl{MCS}\xspace}
\newcommand{\kMCS}{\textsl{k-}\MCS\xspace}
\newcommand{\Compl}{\textsl{Compl}}
 \newcommand{\ei}{\emph{(i)\/}\xspace}
 \newcommand{\eii}{\emph{(ii)\/}\xspace}
 \newcommand{\eiii}{\emph{(iii)\/}\xspace}
\newcommand{\Gen}[1]{G_{#1}}				
\newcommand{\GenC}{\Gen{\C}}				
\newcommand{\qcont}{\sqsubseteq}
\newcommand{\avl}[1]{{#1}^{a}}                           
\newcommand{\idl}[1]{{#1}^{i}}                           
\newcommand{\compl}[1]{\textsl{Compl}(#1)}               
\newcommand{\tc}[2]{\compl{#1;\,#2}}                     
\newcommand{\qc}[1]{\compl{#1}}                          
\newcommand{\cdb}[1]{\mathit{D}_{#1}}                    
\newcommand{\IFF}{\Longleftrightarrow}                   
\newcommand{\IDB}{\D}                                    
\newcommand{\di}{\idl D}                                 
\newcommand{\da}{\avl D}                                 
\newcommand{\drule}[2]{{#1}\qif{#2}}                     
\newcommand{\PIPTWO}{\Pi^\text{P}_2}
\newcommand{\school}{\textit{school\/}\xspace}
\newcommand{\sname}{\constantmath{sname\/}\xspace}
\newcommand{\pname}{\constantmath{pname\/}\xspace}
\newcommand{\type}{\constantmath{type\/}\xspace}
\newcommand{\district}{\constantmath{district\/}\xspace}
\newcommand{\lang}{\constantmath{lang\/}\xspace}
\newcommand{\goethe}{\constantmath{goethe\/}\xspace}
\newcommand{\conn}{\textit{conn\/}\xspace}
\newcommand{\newton}{\constantmath{\goethe\/}\xspace}
\newcommand{\halfDay}{\constantmath{halfDay\/}\xspace}
\newcommand{\fullDay}{\constantmath{fullDay\/}\xspace}
\newcommand{\QC}{Q_{C}}			
\newcommand{\english}{\constantmath{english\/}\xspace}	
\newcommand{\primary}{\constantmath{primary\/}\xspace}
\newcommand{\merano}{\constantmath{merano\/}\xspace}
    \newcommand{\indexpb}{\textit{pb}} 
\newcommand{\indexsp}{\textit{sp}} 
\newcommand{\indexenp}{\textit{enp}} 
\newcommand{\indexppb}{\textit{ppb}} 
\newcommand{\indexpbl}{\textit{pbl}} 
\newcommand{\indexl}{\textit{l}} 
\newcounter{firstexampleamw}[section]
\newcounter{examplectdamw}[section]
\newcommand{\dom}{{\sf dom\/}\xspace}
\DeclareMathAlphabet\mathbfcal{OMS}{cmsy}{b}{n}
\newcommand{\nn}{\mathbb{N}}
\newcommand{\te}[1]{\text{~#1~}}
\newcommand{\mgu}{\text{mgu}}
\newtheorem{theoremx}[theorem]{Theorem}
\newtheorem{propositionx}[theorem]{Proposition}
\newtheorem{lemmax}[theorem]{Lemma}
\theoremstyle{definition}
\newtheorem{definitionx}[theorem]{Definition}
\newcommand{\myqedsymbol}{\ensuremath{\triangle}}%
\newenvironment{examcont-new}[1]
  {\newcommand{\continuedexref}{\ref*{#1}}\continuedex}
  {\hfill\myqedsymbol\endcontinuedex}
\newenvironment{example-new}
  {\pushQED{\qed}\examplex}
  {\popQED\endexamplex}
 \newcommand{\test}{\mathit{test}}
\newcommand{\SUB}[1]{\textsl{Sub}_{#1}}
\begin{document}

\title{Complete Approximations of Incomplete Queries}
\titlerunning{}  

\newcommand{\authspace}{\hspace{1.5ex}}
\author{%
Julien Corman \and 
Werner Nutt \and 
Ognjen Savkovi\'c 
}		
\institute{
	Free University of Bozen-Bolzano, Italy 
}



\maketitle


\begin{abstract}
This paper studies the completeness of conjunctive queries over a partially complete database and the approximation of incomplete queries.
Given a query and a set of completeness rules (a special kind of tuple generating dependencies) that specify which parts of the database are complete,
we investigate whether the query can be fully answered, as if all data were available.
If not, we explore reformulating the query into either Maximal Complete Specializations (MCSs) or the (unique up to equivalence) Minimal Complete Generalization (MCG) that can be fully answered, that is,
the best complete approximations of the query from below or above in the sense of query containment.
We show that the MSG can be characterized as the least fixed-point of a monotonic operator in a preorder.
Then, we show that an MCS can be computed by recursive backward application of completeness rules.
We study the complexity of both problems and discuss implementation techniques that rely on an ASP and Prolog engines, respectively.
\end{abstract}


\section{Introduction}

Completeness is one of the classical dimensions of data quality, well studied both in the context of relational data~\cite{%
Fan:Geerts-relative_information_completeness:pods:09,%
Fan:Geerts-capturing_missing_tuples_and_values:pods:10,%
Libkin:SQL:Null:PODS2016,Arenas:Aggregates:IncosistentDBs:Journal:2003
}
and more recently in the context of knowledge graphs~\cite{%
  Razniewski-2024-Comple-Survey,%
  fensel2020knowledge}.
An information system may be incomplete because some data has not yet been inserted,
or simply because some information is unavailable for certain records.
As a consequence, a query $Q$ that retrieves data may itself be incomplete,
in the sense that when $Q$ is evaluated over the actual (incomplete) database, 
some answers to $Q$ over the ideal (complete) database are missing.
Similarly, if $Q$ produces statistics (e.g., count answers),
then these statistics may be incorrect.

On the other hand, part of the data is often \emph{known} to be complete.
This information may be sufficient to confirm that $Q$ is complete, or,
if this is not the case,
to produce a query that is complete and \emph{approximates} $Q$.
The latter task is the one investigated in this paper.
More precisely, we consider two natural strategies to approximate $Q$.
The first strategy consists in producing a complete query $Q'$ that is more \emph{general} than $Q$,
in the sense that the answers to $Q$ over the ideal (complete) database are a subset of the answers to $Q'$.
In a scenario where one is searching for a specific piece of information,
this strategy ensures that it will be retrieved.
Instead, the second strategy consists in producing a complete query that is more \emph{specific} than $Q$.
This can be useful for statistics:
the answers to $Q$ may not be complete yet (e.g., for a whole country),
but one may already publish partial statistics (e.g., for a given region) that are guaranteed to be correct.


As a (toy) running example, we will consider a database instance $D$ of a hypothetical school information system in the Italian province of Bolzano.
%
Following the approach of~\cite{levy_completeness,Razniewski:Nutt-Compl:of:Queries-VLDB11,CIKM2015-Nutt},
one can express which (parts of which) tables in $D$ are known to be complete, using so-called \emph{table-completeness statements} (TCSs).
For example, one may write three TCSs that say
that $D$ 
contains 
\ei \quotes{all primary schools,}
\eii \quotes{all pupils attending a school in the district of Merano,} and
\eiii \quotes{all English learners that are pupils at a primary school,}
respectively. 
Conceptually, the TCS \eii for instance asserts that for every pupil enrolled in a school in the district of Merano,
there exists a corresponding record in $D$'s \quotes{pupils} table.

Now consider the query $Q$ that retrieves all language learners enrolled in a primary school in the district of Merano.
The three TCSs above are not sufficient to infer that $Q$ is complete.
However, one can approximate $Q$ by either generalizing or specializing it.
For instance, the query $Q^+$ that retrieves all pupils enrolled in a primary school in the district of Merano generalizes $Q$,
and is complete, from TCSs \ei and \eii.
If the purpose of $Q$ was to retrieve a specific pupil (among the ideal answers to $Q$),
then the answers to $Q'$ are guaranteed to contain this pupil.
Conversely,
the query $Q^-$ that retrieves all English learners enrolled in a primary school in the district of Merano specializes $Q$,
and is also complete, from TCSs \ei, \eii and \eiii.
If the purpose of $Q$ was to publish statistics (e.g., its number of answers),
then partial statistics (e.g., the number of answers to $Q$ restricted to English learners, a.k.a. $Q^-$) can already be published.


Previous work studied the problem of checking whether a conjunctive query (CQ) is complete wrt to a set of TCSs.
In particular, it was shown in~\cite{Razniewski:Nutt-Compl:of:Queries-VLDB11} that this problem can be reduced to the classical problem of query containment~\cite{Chandra-CQ-containment-77}.
Then in~\cite{CIKM2015-Nutt}, this same problem was reduced to reasoning in Datalog programs under answer-set semantics,
extending the setting by considering the presence of keys, foreign keys, and finite domain constraints.
We pursue this work by studying the two types of complete approximations (generalization and specialization) sketched above,
still focusing on CQs.


\paragraph*{Contributions}
Among the complete CQs that generalize a CQ $Q$, we identify the most specific ones,
which we call \emph{Minimal Complete Generalizations\/} (MCGs).
And we define the \emph{Maximal Complete Specializations} (MCSs) of $Q$ analogously.
Our main contributions can be summarized as follows:
\begin{itemize}
\vspace*{-0,5em}

\item We show that if $Q$ admits an MCG, then it is unique up to query equivalence.
  To this end, we introduce a generalization operator that maps incomplete queries to \enquote{less} incomplete ones, 
  and show that a fixed-point of this operator (modulo query equivalence) is a complete generalization.
  This yields an algorithm to compute an MCG.
  We also study the computational complexity of the corresponding decision problem.
\item We show that there can be multiple MCSs, 
  or (infinitely many) complete specializations but no maximal one.
  As a consequence, we restrict our search space for an MCS to queries of a given bounded size.
  For this latter problem, we first propose an algorithm that finds MCSs (if any) that can be obtained without adding any atom to $Q$,
leveraging unification between the query under construction and atoms in TCSs.
  We then extend this idea to identify MCSs that may contain more atoms.
  \item 
  We discuss implementation techniques for finding MSG and MCSs using forward rule application in Datalog
  and backward rule application in Prolog, respectively.
\end{itemize}


\paragraph*{Organization}

In Section~\ref{sec:prelim}, we recall basic definitions from database theory and
formally introduce the problem of query completeness and relevant characterization theorems.
In Sections~\ref{sec:query:generalization} and~\ref{sec:query:specialization},
we discuss the generalization and specialization problems, respectively.
Section~\ref{sec:implementation} discusses corresponding implementation techniques.



\section{Query Completeness}
\label{sec:prelim}
\paragraph*{Preliminaries}
We adopt standard notation from databases.
We assume an infinite set each of relation symbols, constants, and variables.
The constants make up the set $\dom$ (the \emph{domain}).
Terms are constants or variables.
For variables, we use upper-case, for constants lower-case letters,
tuples are indicated by an overline, like $\tconst$.
%
%
%
For a relation symbol $R$ with arity $n$, an \emph{atom\/} is an expression
$R(t_1,\ldots,t_n)$, where $t_1,\ldots,t_n$ are terms.
A database \emph{instance} $D$ is a finite set of ground atoms (facts),
that is, atoms that contain only constants.
A \emph{conjunctive query\/} (CQ) is written as 
$\query{Q(\tpl u)}{B}$, where $Q(\tpl u)$ is an atom and $B$ a conjunction of atoms.
We call~$B$ the \emph{body\/} of $Q$ and $\tpl u$ the \emph{head terms.}
A CQ is \emph{safe\/} if all variables of $\tpl u$ occur in $B$.
Given a CQ $\query{Q(\tpl u)}{B}$ and an instance $D$, 
a tuple $\alpha\tpl u$ is an \emph{answer\/} of $Q$ over $D$, written $Q(D)$,
if
$\alpha$ is an assignment of variables from $Q$ to $\dom$ such that 
$\alpha B \subseteq D$. 
A query $Q$ is \emph{contained\/} in a query $Q'$, written $Q \qcont Q'$, 
if for every database instance $D$ it holds that $Q(D) \subseteq Q'(D)$. 
We say that $Q$ and $Q'$ are \emph{equivalent\/} if $Q \qcont Q'$ and $Q' \qcont Q$.
The query $Q$ is \emph{strictly contained\/} in $Q'$ if $Q$ is contained in $Q'$ and $Q'$ is not equivalent to $Q$.
If $Q$ is contained in $Q'$, then $Q'$ is \emph{more general\/} than $Q$,
and $Q$ is \emph{more special\/} than $Q'$.


\paragraph*{Completeness Theory}

When stating that data is potentially incomplete, one must have a conceptual complete reference. 
We model an \emph{incomplete database\/} in the style of \cite{motro_integrity} 
as a pair of database instances $\IDB=(\di,\da)$, where $\da\incl\di$.
Here, $\di$ is called the \emph{ideal\/} state and $\da$ the \emph{available\/} state.
In an application, the state stored in a DBMS is
the available state, which often represents only a part of the facts that hold in reality.
The facts holding in reality constitute the ideal state, which however is unknown.

Given a query $Q$ and an incomplete database $\IDB = (\di,\da)$,
we say that $Q$ is \emph{complete over~$D$\/}
if $Q(\di) = Q(\da)$ and we write $\D\models \qc Q$.

To specify that parts of a relation instance are complete, we introduce a kind of metadata,
which we call \emph{table completeness\/} (TC) statements (or TCS for short).
A~TCS, 
written $\tc{R(\tpl s)}G$, has two components,
a relational atom $R(\tpl s)$, where $\tpl s$ is a tuple of terms, 
and (a possibly empty) conjunction of atoms~$G$, called \emph{condition.} 
Intuitively, a TC statement $\tc{R(\tpl s)}G$ asserts that table $R$
is complete for all tuples that match $\tpl s$ and can be joined with $G$.
We denote a TC statement generically as $C$.

To define the semantics of a TCS $C$, 
we associate to it a query $\query{\QC(\tpl s)}{R(\tpl s), G}$.
Then $C$ is satisfied by 
$\IDB = (\di,\da)$, written $\IDB \models \tc{R(\tpl s)}{G}$,
if $\QC(\di) \incl R(\da)$.
This means that the ideal instance $\di$ is used to determine the tuples 
in the ideal instance $R(\di)$ of $R$ that match $\tpl s$ and satisfy~$G$, and that 
the statement $C$ is satisfied if these tuples are present in the available
version $R(\da)$. If no TC statement is associated with a relation then we do not know anything about the completeness of that relation. 

Let $\C$ be a set of TC statements.
We say that $\C$ entails the completeness of $Q$,
written $\C\models\qc Q$, if 
for every $\D$ it holds that $\D \models \C$ implies $\D\models \qc Q$.

\begin{example-new}
\label{ex:intro}
We consider a simplified database 
\quotes{schoolBolzano}, modeling schools in the province of Bolzano. 
It consists of the three tables below: 
\begin{align*}
& \pupil({\pname},\ccode, \sname),  & 
& \school({\sname},\type,\district), & \learns({\pname},{\lang}).   
\end{align*}
The table $\pupil$ contains for each pupil the name,
the class and the school the pupil attends.
%
%
The table $\school$ records for each school the name, the type (e.g.,
primary or middle school), and its school district.
Finally, the table $\learns$ records which pupil learns which language. 
%
Next, we assume the following completeness statements:
\begin{align*}
C_\indexsp&\colon&{\!}&\Compl(\school(S,\primary,D);\,\true),\\
C_\indexpb&\colon&{\!}&\Compl(\pupil(N,C,S);\,\school(S,T,\merano)),\\
C_\indexenp&\colon&{\!}&\Compl(\learns(N,\english);\,\pupil(N,C,S),\school(S,\primary,D)). 
\end{align*}
The statements say that the database contains
all primary schools ($C_\indexsp$),
all pupils attending a school in the district of Merano ($C_\indexpb$), and
all English learners that are pupils attending a primary school ($C_\indexenp$).

For instance, let $D^a=\set{\school(\goethe,\primary,\merano})$
and $D^i = D^a \cup \set{\pupil(\john,1,\goethe)}$. 
Then, $(D^i,D^a)\models C_\indexsp$, but 
$(D^i,D^a)\not \models C_\indexpb$.

Now consider the query for the \enquote{names of all pupils that attend a primary school in the district of Merano}:
\begin{align*} 
Q_\indexppb(N) \la
  \ \pupil(N,C,S),\school(S,\primary,\merano).
\end{align*}
Intuitively, the query is complete because our database contains all records of
pupils from primary schools in the district of Merano ($C_\indexpb$)
and it contains all such schools ($C_\indexsp$).

Alternatively, let us consider the query for \enquote{the names of all pupils that attend a primary school in the district of Merano and that learn some language}:
\begin{align*}
Q_\indexpbl(N)  \la
   \ \pupil(N,C,S), \school(S,\primary,\merano), \learns(N,L)
\end{align*}

This query is not complete since some language learners (of a language other than English) may be missing from the database. 
\end{example-new}


\paragraph*{Reasoning about Query Completeness}
To reason about completeness, 
we define for every set $\C$ of TC statements the operator~$T_\C$ that maps database instances to database instances.  
If $C$ is a TC statement about $R$, then we define 
\begin{equation}
  \label{eqn-TC:Operator}
 \mbox{$T_C(D) := \set{R(\tpl t) \mid \tpl t \in \QC(D)}$} \quad\text{and}\quad\mbox{$T_\C(D) := \bigcup_{C\in\C} T_C(D).$}
\end{equation}
Since all queries $Q_C$ are monotonic, the operator $T_\C$ is monotonic in that
for any $D\incl D'$ it holds that $T_\C(D)\incl T_\C(D')$.
Moreover, for every instance~$D$, the pair $(D,T_\C(D))$ is an incomplete database that satisfies $\C$,
and $T_\C(D)$ is the smallest set (wrt set inclusion) for which this holds.
Below we cite a proposition 
that summarizes the connection between
the $T_\C$ operator and the satisfaction of the set $\C$ by a in incomplete database:

\begin{propositionx}[\mbox{${T_\C}$} Operator \protect{\cite[Proposition 1]{CIKM2015-Nutt}}]
\label{proposition-tc-operator}
Let $\C$ be a set of TC statements. Then
\begin{itemize}
  \vspace{-1ex}
	\item $T_\C (D) \subseteq D$, \quad for all database instances $D$;
	\item $(D^i,D^a) \models \C$ \  iff\ \ $T_\C(D^i) \subseteq D^a$, \quad
	for all $D^a \subseteq D^i$.
\end{itemize}
\end{propositionx}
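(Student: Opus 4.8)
The plan is to prove both items by directly unwinding the definitions of $Q_C$, $T_C$, and of satisfaction of a TCS, reducing everything to the case of a single statement $C$. This reduction is immediate, since $T_\C(D) = \bigcup_{C\in\C} T_C(D)$ by definition, and $(D^i, D^a) \models \C$ holds exactly when $(D^i, D^a) \models C$ for every $C \in \C$. The one structural observation that does the real work is that, for a TCS $C$ of the form $\tc{R(\tpl{s})}{G}$, the associated query $Q_C(\tpl{s}) \la R(\tpl{s}), G$ carries the atom $R(\tpl{s})$ in its own body; this is what forces every tuple produced by $T_C$ to stem from a fact already present in the input instance.

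First I would establish the first item. Fix a TCS $C$ about $R$ and an instance $D$, and take any atom $R(\tpl{t}) \in T_C(D)$. By definition of $T_C$ there is $\tpl{t} \in Q_C(D)$, hence an assignment $\alpha$ with $\alpha \tpl{s} = \tpl{t}$ whose image of the body lies in $D$; in particular $\alpha(R(\tpl{s})) = R(\tpl{t}) \in D$. Thus $T_C(D) \subseteq D$, and taking the union over $C \in \C$ gives $T_\C(D) \subseteq D$ for every $D$.

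Next I would prove the second item by a chain of equivalences, keeping careful track of the distinction between an $R$-fact $R(\tpl{t})$ and the bare tuple $\tpl{t}$. For a single $C$ about $R$, note that $T_C(D^i) = \{ R(\tpl{t}) \mid \tpl{t} \in Q_C(D^i) \}$ consists only of $R$-atoms, so $T_C(D^i) \subseteq D^a$ holds iff $R(\tpl{t}) \in D^a$ for every $\tpl{t} \in Q_C(D^i)$; by the meaning of $R(D^a)$ (the tuples whose $R$-fact lies in $D^a$) this is exactly $Q_C(D^i) \subseteq R(D^a)$, i.e. $(D^i, D^a) \models C$. Quantifying over all $C \in \C$, and using that $\bigcup_{C\in\C} T_C(D^i) \subseteq D^a$ iff each $T_C(D^i) \subseteq D^a$, turns this into the claimed equivalence $(D^i, D^a) \models \C \iff T_\C(D^i) \subseteq D^a$. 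The side condition $D^a \subseteq D^i$ plays no role in the computation itself; it merely guarantees that $(D^i, D^a)$ is a legitimate incomplete database.

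The argument involves no genuine difficulty; the only thing to watch is the bookkeeping between tuples and atoms, since queries return tuples $\tpl{t}$ whereas $T_C$, $D^i$ and $D^a$ contain atoms $R(\tpl{t})$, and satisfaction is phrased through the tuple set $R(D^a)$. Getting these translations right, together with remembering that $R(\tpl{s})$ occurs in the body of $Q_C$ for the first item, is all that the proof requires.
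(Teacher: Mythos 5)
Your proof is correct: both items follow by exactly the unwinding of definitions you describe, with the first item resting on the observation that $R(\tpl s)$ occurs in the body of $\QC$, and the second on the direct translation between the tuple set $\QC(D^i)$ and the atom set $T_C(D^i)$. Note that the paper itself gives no proof of this proposition --- it is imported verbatim from prior work (\cite[Proposition~1]{CIKM2015-Nutt}) --- so there is nothing to compare against, but your argument is the standard one and is complete.
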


In \cite{CIKM2015-Nutt} it has been shown that, 
similar to containment checking \cite{Chandra-CQ-containment-77}, 
completeness checking can be reduced to the question whether
a test query returns a specific result over a test database.
Below, we summarize that approach.

In what follows, we always consider a set of TC statements $\C$ and a conjunctive query~$Q$ defined by the rule
$\drule{Q(\tpl u)}{R_1(\tpl t_1),\ldots,R_n(\tpl t_n)}$.
We define the set of facts $\cdb Q$,
which we call the \emph{canonical database\/} of $Q$,
obtained by freezing the atoms in the body of $Q$.%
\footnote{%
\quotes{Freezing} variables is a well-known concept in logic programming and database theory,
which allows one to treat a variable like a constant.}
Thus, 
\begin{equation*}
  \label{eqn-canonical:db}
    \cdb Q = \set{R_1(\theta\tpl t_1),\ldots,R_n(\theta\tpl t_n)},  
\end{equation*}
where $\theta$~is the substitution that maps each variable $X$ to 
the \quotes{frozen version} $\theta X$ of $X$.

To check whether $Q$ is complete
one applies $T_\C$ to $\cdb Q$
and verifies whether $Q$ can retrieve the 
frozen tuple of distinguished variables.

\begin{theoremx}[Characterization of Completeness \protect{\cite[Theorem 2]{CIKM2015-Nutt}}]
 \label{theo-plain:reasoning}
Let $\C$ be a set of TC statements, and 
$\query{Q(\tpl u)}{B}$ be a conjunctive query. Then
\begin{equation*}
  \label{characterization-plain:reasoning}
\C \models \qc Q
\quad\IFF\quad
\theta\tpl u \in Q(T_\C(\cdb Q)).
\end{equation*}
\end{theoremx}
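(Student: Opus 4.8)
The plan is to prove the two implications separately, in each case using Proposition~\ref{proposition-tc-operator} as the bridge between the operator $T_\C$ and satisfaction of $\C$, together with the monotonicity of $T_\C$ noted just before that proposition.

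For the direction ``$\Rightarrow$'', I would exhibit a single incomplete database that forces the conclusion. Take $\D_0 = (\cdb Q,\, T_\C(\cdb Q))$. By the first item of Proposition~\ref{proposition-tc-operator}, $T_\C(\cdb Q) \incl \cdb Q$, so $\D_0$ is a legitimate incomplete database, and by the second item $\D_0 \models \C$ because the required inclusion $T_\C(\cdb Q) \incl T_\C(\cdb Q)$ holds trivially. Since $\C \models \qc Q$, we get $\D_0 \models \qc Q$, that is $Q(\cdb Q) = Q(T_\C(\cdb Q))$. The frozen head tuple $\theta\tpl u$ lies in $Q(\cdb Q)$, witnessed by the freezing substitution $\theta$ itself (which maps $B$ exactly onto $\cdb Q$); hence $\theta\tpl u \in Q(T_\C(\cdb Q))$, as claimed.

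For the direction ``$\Leftarrow$'', fix any $\D = (\di,\da)$ with $\D \models \C$; I must show $Q(\di) = Q(\da)$. Monotonicity of $Q$ and $\da \incl \di$ give $Q(\da) \incl Q(\di)$, so only the reverse inclusion needs work. Take $\bar a \in Q(\di)$, witnessed by an assignment $\beta$ with $\beta B \incl \di$ and $\beta\tpl u = \bar a$. The idea is to transport the assumed witness over the canonical database back to $\da$ through the substitution $h$ sending each frozen constant $\theta X$ to $\beta X$ and fixing all other constants; note $h(\cdb Q) = \beta B \incl \di$ and $h(\theta\tpl u) = \bar a$. The hypothesis $\theta\tpl u \in Q(T_\C(\cdb Q))$ yields an assignment $\gamma$ with $\gamma B \incl T_\C(\cdb Q)$ and $\gamma\tpl u = \theta\tpl u$. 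Using the commutation of $T_\C$ with substitutions (below) plus monotonicity and Proposition~\ref{proposition-tc-operator}, I obtain the chain $h(T_\C(\cdb Q)) \incl T_\C(h(\cdb Q)) = T_\C(\beta B) \incl T_\C(\di) \incl \da$. Composing $\gamma$ with $h$ then gives $(h\gamma) B \incl \da$ and $(h\gamma)\tpl u = h(\theta\tpl u) = \bar a$, so $\bar a \in Q(\da)$.

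The main obstacle, and really the only step beyond bookkeeping, is the commutation property $h(T_\C(D)) \incl T_\C(h(D))$ for an arbitrary substitution $h$. This reduces to the homomorphism property of conjunctive queries, namely $h(\QC(D)) \incl \QC(h(D))$ for each query $\QC$ associated with a TC statement: a witnessing assignment $\sigma$ for a tuple $\bar t \in \QC(D)$ composes with $h$ to witness $h\bar t \in \QC(h(D))$. One has to confirm this survives the presence of constants in the body or head of $\QC$, which it does because $h$ is applied uniformly to head and body; then summing over $C \in \C$ gives the statement for $T_\C$. Everything else is the careful setup of the two substitutions $\theta$ (freezing) and $h$ ($\theta X \mapsto \beta X$) and the chaining of inclusions supplied by Proposition~\ref{proposition-tc-operator} and monotonicity.
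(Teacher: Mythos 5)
Your proof is correct. Note that the paper does not prove this statement itself: it is imported as Theorem~2 of the cited work \cite{CIKM2015-Nutt}, so there is no in-paper proof to compare against; the example following the theorem only rehearses the ``$\Leftarrow$'' direction informally on the running query. Your argument is the standard canonical-database one and both directions check out: the ``$\Rightarrow$'' direction correctly instantiates the entailment at the extremal witness $(\cdb Q, T_\C(\cdb Q))$, which Proposition~\ref{proposition-tc-operator} certifies as a model of $\C$, and the ``$\Leftarrow$'' direction correctly transports the frozen witness along $h\colon \theta X \mapsto \beta X$. You rightly isolate the only nontrivial step, the commutation $h(T_\C(D)) \incl T_\C(h(D))$; just be aware that the reason constants cause no trouble is specifically that your $h$ fixes every constant that is not a frozen variable (and frozen constants are fresh, so they never occur in a TC statement), rather than the mere uniformity of applying $h$ to head and body.
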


\begin{example-new}
We illustrate this theorem with our running example.
Suppose $\IDB=(\di,\da)$ satisfies $\set{C_\indexsp,C_\indexpb}$. 
Consider an answer, say $n'$, returned by $Q_\indexppb$ over the ideal instance $\di$.
Then $\di$ contains two atoms of the form 
$\pupil(n',c',s')$ and 
$\school(s',\primary,\merano)$.
Now, due to $C_\indexsp$, also $\da$ contains $\school(s',\primary,\merano)$, and
due to $C_\indexpb$, the atom 
$\pupil(s',c',\newton)$ is in $\da$, too.
Consequently, $Q_\indexppb$ returns $n'$ also over $\da$.
Since $\di$ and $\da$ were arbitrary, this shows that 
$\set{C_\indexsp,C_\indexpb} \models \qc{Q_\indexppb}$.
\end{example-new}

\begin{example-new}[Complete Generalizations and Specializations]
We recall query $Q_\indexpbl(N)$ that was incomplete given the assumptions.
One way to obtain more general queries is by dropping some of the query atoms (that is, relaxing query conditions). 
Hence, if we drop the $\learns$ atom we obtain a complete generalization of $Q$: 
\[  
Q_\indexpbl^{\textit{gen}}(N)  \la \pupil(N,C,S),\school(S,\primary,\merano)
\]
which is the same as query $Q_\indexppb$ and is complete. 

To obtain query specializations one can turn variables into constants or  add new atoms to the body (or both).
For instance, we observe if in $Q_\indexpbl(N)$ we replace $\learns(N,L)$  with 
$\learns(N,\english)$ we obtain a query:
\begin{align*}
Q_\indexpbl^{\textit{spec}}(N)  \la
    & \ \pupil(N,C,S), \school(S,\primary,\merano), \learns(N,\english)
\end{align*}
which is complete, since $C_\indexenp$ guarantees completeness of the $\learns$-atom.
\end{example-new}


\section{Query Generalization}

\label{sec:query:generalization}




In this section, we study the problem of finding a complete generalization of a (possibly incomplete) query.
We restrict our scope to the case where all queries (input query and generalizations) are CQs. 

First, we rephrase the well-known characterization of containment among conjunctive queries
due to Chandra and Merlin \cite{Chandra-CQ-containment-77} in our formalism.

\begin{propositionx}[Characterization of Containment]
  \label{prop-containment}
Let $\query{Q(\tpl u)}{B}$ and $\query{Q'({\tpl u}')}{B'}$ be conjunctive queries. Then
\begin{equation*}
  \label{characterization-containment}
  Q \qcont Q'
\quad\IFF\quad
\theta\tpl u \in Q'(\cdb Q).
\end{equation*}
\end{propositionx}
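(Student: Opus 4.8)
The plan is to give the classical Chandra--Merlin homomorphism argument, adapted to the freezing formalism of the excerpt. The key device is that an answer of $Q'$ over the canonical database $\cdb Q$ is, after ``unfreezing'', exactly a containment mapping (homomorphism) from $Q'$ into $Q$ that respects head terms. Throughout I write $\query{Q(\tpl u)}{B}$ and $\query{Q'({\tpl u}')}{B'}$, and recall that $\theta$ freezes the variables of $B$, so that $\cdb Q = \theta B$ and $\theta$ is a bijection between the variables of $Q$ and the fresh constants it introduces.

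For the direction ``$\Rightarrow$'', I would simply instantiate the containment $Q \qcont Q'$ at the single database $D = \cdb Q$. First observe that $\theta$ is itself an assignment witnessing $\theta\tpl u \in Q(\cdb Q)$, since $\theta B = \cdb Q \subseteq \cdb Q$. Applying $Q(\cdb Q) \subseteq Q'(\cdb Q)$ then yields $\theta\tpl u \in Q'(\cdb Q)$ immediately.

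For the direction ``$\Leftarrow$'', suppose $\theta\tpl u \in Q'(\cdb Q)$: there is an assignment $\beta$ of the variables of $Q'$ with $\beta B' \subseteq \cdb Q$ and $\beta {\tpl u}' = \theta\tpl u$. Composing with the inverse freezing, set $\mu := \theta^{-1} \circ \beta$, a substitution from the variables of $Q'$ to the terms of $Q$. Unfreezing turns $\beta B' \subseteq \cdb Q = \theta B$ into $\mu B' \subseteq B$ (each atom $\beta A$ equals some $\theta A_i$, hence $\mu A = A_i \in B$), and $\beta {\tpl u}' = \theta\tpl u$ into $\mu {\tpl u}' = \tpl u$. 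Thus $\mu$ is a homomorphism from $Q'$ to $Q$ that maps the head of $Q'$ to that of $Q$. Now take any instance $D$ and any answer $\alpha\tpl u \in Q(D)$, witnessed by $\alpha$ with $\alpha B \subseteq D$. The composite $\alpha \circ \mu$ is an assignment of the variables of $Q'$ satisfying $(\alpha\circ\mu) B' = \alpha(\mu B') \subseteq \alpha B \subseteq D$ and $(\alpha\circ\mu) {\tpl u}' = \alpha(\mu{\tpl u}') = \alpha\tpl u$, so $\alpha\tpl u \in Q'(D)$. Since $D$ and the answer were arbitrary, $Q \qcont Q'$.

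The main obstacle is bookkeeping rather than conceptual. One must set up $\theta^{-1}$ so that it is well defined, unfreezing exactly the fresh constants of $\cdb Q$ while fixing the genuine constants of $Q$; and one must invoke safety of $Q$ to ensure that every term of $\tpl u$ occurs in $B$, so that $\mu{\tpl u}'$ and $\alpha\tpl u$ are determined and the composition of assignments commutes with application to the head. With these conventions fixed, both the instantiation step and the composition step are routine.
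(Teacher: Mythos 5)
Your proof is correct and follows the same route the paper takes: the paper justifies the proposition in one line by appealing to the Chandra--Merlin homomorphism characterization, observing that an answer $\theta\tpl u \in Q'(\cdb Q)$ is precisely a homomorphism from $Q'$ to $Q$ after unfreezing. You simply spell out both directions (instantiation at the canonical database, and composition of the unfrozen homomorphism with an arbitrary satisfying assignment) in full detail.
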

The proposition holds because $Q$ is contained in $Q'$ iff there exists a homomorphism~$\delta$ from $Q'$ to $Q$.
Such a $\delta$ can be seen as an assignment of constants and variables in $D_Q$ to the variables in $Q'$
that satisfies $Q'$ and returns the tuple of head terms $\theta\tpl u$ of $Q$.

We say that a query $Q'$ is a \emph{complete generalization\/} of $Q$ wrt 
a set of TC statements $\C$ if $Q'$ is more general than $Q$
and $Q'$ is complete wrt $\C$.
\begin{definitionx}[Minimal Complete Generalization]
A query $Q'$ is a \emph{minimal complete generalization\/} (MCG) of $Q$ 
wrt a set of TC statements $\C$ if:  
\begin{itemize}
\item $Q'$ is a complete generalization of $Q$ wrt $\C$, and
\item it is a minimal one, that is, there exists no query $Q''$ such that 
     $Q''$ is a complete generalization of $Q$ and
     $Q''$ is strictly contained in $Q'$.
\end{itemize}
\end{definitionx}

\paragraph*{Computing Minimal Complete Generalizations}

In the following we investigate  properties of MCGs.
%
We first show that in our search for MCGs of $Q$ it is sufficient to concentrate on \emph{subqueries\/} of $Q$,
that is, (safe) queries obtained by dropping some atoms from the body of $Q$.
Note that if $Q_0$ is a subquery of $Q$, then $Q$ is contained in $Q_0$.
More specifically, we will show the following proposition.

\begin{propositionx}[MCGs are Subqueries]
  \label{prop-MCGs:Are:Subqueries}
  Let $\C$ be a set of TC statements, $Q$ a conjunctive query, and $Q'$ a complete generalization of $Q$ wrt $\C$.
  Then there exists a subquery $Q_0$ of $Q$ such that $Q_0\qcont Q'$ and $Q_0$ is a complete generalization of $Q$.
\end{propositionx}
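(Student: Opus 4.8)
The plan is to produce $Q_0$ as a subquery of $Q$ whose frozen body consists only of atoms that $T_\C$ certifies as complete, reading off this body from a homomorphism that witnesses both $Q\qcont Q'$ and the completeness of $Q'$. Two results from the excerpt drive the construction; throughout, $\theta,\theta',\theta_0$ denote the freezing substitutions of $Q,Q',Q_0$. First, since $Q\qcont Q'$, Proposition~\ref{prop-containment} gives a homomorphism $\delta$ from $Q'$ into $\cdb Q$ with $\delta\tpl u'=\theta\tpl u$; after freezing I regard it as a map $\bar\delta\colon\cdb{Q'}\to\cdb Q$. Second, since $Q'$ is complete, Theorem~\ref{theo-plain:reasoning} gives a homomorphism $\hat\rho\colon\cdb{Q'}\to T_\C(\cdb{Q'})$ fixing the frozen head $\theta'\tpl u'$. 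I will also use one fact about $T_\C$ beyond Proposition~\ref{proposition-tc-operator}: because every $Q_C$ is a conjunctive query, and conjunctive queries are preserved under homomorphisms, $f(T_\C(D))\incl T_\C(f(D))$ for every homomorphism $f$ (the same observation that yields monotonicity).

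The key step is to turn $\hat\rho$ into an idempotent retraction. Iterating the endomorphism $\hat\rho$ of the finite instance $\cdb{Q'}$, some power $e:=\hat\rho^{\,k}$ is idempotent; it still fixes $\theta'\tpl u'$, its image $I:=e(\cdb{Q'})$ is contained in $T_\C(\cdb{Q'})$, and $e$ restricts to the identity on $I$. I claim $I$ is \emph{self-complete}, that is, $T_\C(I)=I$. Indeed, take $a\in I$; since $a\in T_\C(\cdb{Q'})$, some TCS $C=\tc{R(\tpl s)}{G}$ has a match $\nu$ in $\cdb{Q'}$ with $\nu R(\tpl s)=a$ and $\nu G\incl\cdb{Q'}$. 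Applying $e$ and using $e(a)=a$ and $e(\cdb{Q'})=I$, the composite $e\nu$ witnesses $a\in T_C(I)$. Hence $I\incl T_\C(I)$, and the reverse inclusion is Proposition~\ref{proposition-tc-operator}.

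Now define $Q_0$ to be the subquery of $Q$ with frozen body $\cdb{Q_0}:=\bar\delta(I)$, i.e. $Q_0(\tpl u)\qif\theta^{-1}(\bar\delta(I))$. Three verifications remain. (i) Since $\cdb{Q_0}\incl\bar\delta(\cdb{Q'})=\delta(B')\incl\cdb Q$, the body of $Q_0$ is a subset of that of $Q$, so $Q_0$ is a subquery and $Q\qcont Q_0$; safety holds because the composite $\mu:=\bar\delta\circ e\circ\theta'$ maps $Q'$ onto $\cdb{Q_0}$ with $\mu\tpl u'=\theta\tpl u=\theta_0\tpl u$, so every head variable of $Q$ occurs in $\cdb{Q_0}$. (ii) This same $\mu$ is a homomorphism from $Q'$ into $\cdb{Q_0}$ realizing $\theta_0\tpl u$, so Proposition~\ref{prop-containment} gives $Q_0\qcont Q'$. (iii) For completeness, apply the homomorphism fact to $\bar\delta$ on $I$: from $T_\C(I)=I$ we get $\cdb{Q_0}=\bar\delta(I)\incl\bar\delta(T_\C(I))\incl T_\C(\bar\delta(I))=T_\C(\cdb{Q_0})$, hence $T_\C(\cdb{Q_0})=\cdb{Q_0}$. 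The freezing substitution $\theta_0$ then retrieves $\theta_0\tpl u$, so Theorem~\ref{theo-plain:reasoning} yields $\C\models\qc{Q_0}$.

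The hard part is exactly verification (iii), together with the device that makes it go through. A complete query need not have all of its atoms complete, so projecting $Q'$ onto a subquery of $Q$ can destroy completeness; in fact the naive choice $\cdb{Q_0}=\delta(B')$ (the plain image of $\delta$) already fails, because a TCS \enquote{reason} certifying an atom may be dropped while the atom is kept. The fix is to retain completeness \emph{reasons} inside the surviving atoms, which is precisely what the idempotent retraction $e$ guarantees through self-completeness of $I$, and which the homomorphism-preservation property of $T_\C$ then transports along $\bar\delta$. In writing out the details I would double-check two technical points: that iterating $\hat\rho$ preserves fixation of the head terms (needed for safety and for the head conditions in (i)–(ii)), and that $\bar\delta$ acts as a genuine homomorphism from the sub-instance $I$ onto $\bar\delta(I)$ so that the homomorphism inequality applies exactly as used in (iii).
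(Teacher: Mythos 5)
Your proof is correct, but it takes a genuinely different route from the paper's. The paper first replaces $Q'$ by an equivalent \emph{minimal} query, takes the containment homomorphism $\delta$ from $Q'$ to $Q$, sets $Q_0=\delta Q'$, and then invokes Lemma~\ref{lemma-completeness:of:minimal:queries}: Claim~1 gives $\cdb{Q'}\incl T_\C(\cdb{Q'})$ for minimal complete queries, and Claim~2 shows that instantiations of minimal complete queries stay complete, which applied to $\delta$ yields completeness of $Q_0$. You avoid minimization entirely: you extract an idempotent retraction $e$ of $\cdb{Q'}$ from the completeness witness of Theorem~\ref{theo-plain:reasoning}, prove that its image $I$ is self-complete ($T_\C(I)=I$), and push that property forward along $\bar\delta$ using the homomorphism-preservation of $T_\C$ (the same transport of TCS matches along substitutions that powers the paper's Claim~2). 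Your $e$ is, in effect, the core retraction that the paper's minimization step performs implicitly, your self-completeness of $I$ is the analogue of Claim~1, and your step~(iii) is the analogue of Claim~2. What the paper's route buys is brevity, since Lemma~\ref{lemma-completeness:of:minimal:queries} is proved anyway and reused elsewhere; what yours buys is self-containment and an explicit diagnosis of why the naive choice $\delta B'$ fails when $Q'$ is not minimal (a completeness ``reason'' can be dropped while the atom it certifies is kept) --- a point the paper silently absorbs into the WLOG-minimality assumption. Your two flagged technical points both check out: an idempotent power of $\hat\rho$ exists on the finite frozen domain and still fixes $\theta'\tpl u'$, and $\bar\delta$ fixes genuine constants, so $f(T_\C(D))\incl T_\C(f(D))$ applies as you use it.
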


For our proof we rely on the properties of minimal conjunctive queries.
We remind the reader that $Q$ is \emph{minimal\/} if all its subqueries are strictly more general that $Q$,
or, in other words, 
if $Q$ does not have redundant atoms.
This means that for any subquery $Q_0$ obtained from $Q$ by dropping some of the atoms in $Q$
it holds that $Q_0 \not \sqsubseteq Q$.
Every conjunctive query is equivalent to a minimal query~\cite{Chandra-CQ-containment-77}.

If $\query{Q(\tpl u)}{B}$ is a query, and $\alpha$ a substitution,
then with $\alpha Q$ we denote the query $\query{\tilde Q(\alpha\tpl u)}{\alpha B}$.
We call $\alpha Q$ an \emph{instantiation\/} of $Q$.

Next we identify properties of minimal queries that we use later in the proofs.
In particular, we show that a minimal query $Q$ is complete iff
the image of the frozen query $D_Q$ under $T_\C$ comprises all of $D_Q$.
Moreover we show that any instantiation of a complete minimal query is again complete.

\begin{lemmax}[Completeness of Minimal Queries]
\label{lemma-completeness:of:minimal:queries}
Let $\C$ be a set of TC statements,
$\query{Q(\tpl u)}{B}$ a minimal conjunctive query, 
and $\alpha$ a substitution. Then
\begin{enumerate}
\item $\C \models \qc Q \quad\IFF\quad \cdb Q \incl T_\C(\cdb Q);$
      \label{Claim-All:Atoms:Mapped}
\item $\C \models \qc Q \quad \Longrightarrow \quad \C \models \qc {\alpha Q}$.
      \label{Claim-Instantiations:Are:Complete}
\end{enumerate}
\end{lemmax}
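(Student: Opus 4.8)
I would prove the two items separately, establishing the equivalence in item~\ref{Claim-All:Atoms:Mapped} first and then deriving item~\ref{Claim-Instantiations:Are:Complete} from it by transporting the completeness witness along a homomorphism.

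For item~\ref{Claim-All:Atoms:Mapped}, the direction ``$\Leftarrow$'' needs no minimality. Since $T_\C(\cdb Q) \incl \cdb Q$ for every instance (Proposition~\ref{proposition-tc-operator}), the hypothesis $\cdb Q \incl T_\C(\cdb Q)$ forces $T_\C(\cdb Q) = \cdb Q$. As the frozen head tuple $\theta\tpl u$ is trivially an answer of $Q$ over its own canonical database $\cdb Q$, it is an answer over $T_\C(\cdb Q)$ as well, and Theorem~\ref{theo-plain:reasoning} yields $\C \models \qc Q$. The substantive direction is ``$\Rightarrow$'', and minimality is the crux here. Assuming $\C \models \qc Q$, Theorem~\ref{theo-plain:reasoning} provides an assignment $\beta$ with $\beta B \incl T_\C(\cdb Q) \incl \cdb Q$ and $\beta\tpl u = \theta\tpl u$. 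Composing with the inverse of the (injective) freezing substitution turns $\beta$ into an endomorphism $\gamma = \theta^{-1}\beta$ of $Q$ that fixes the head terms and satisfies $\gamma B \incl B$. The subquery $Q_0$ whose body is $\gamma B$ is then equivalent to $Q$: it is a subquery, so $Q \qcont Q_0$, while $\gamma$ is a homomorphism from $Q$ into $Q_0$, so $Q_0 \qcont Q$ by Proposition~\ref{prop-containment}. Minimality of $Q$ rules out a proper equivalent subquery, forcing $\gamma B = B$; refreezing gives $\beta B = \theta\gamma B = \cdb Q$, whence $\cdb Q \incl T_\C(\cdb Q)$.

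The main obstacle is exactly this last step: upgrading ``some homomorphic image of $Q$ lands inside $T_\C(\cdb Q)$'' to ``every frozen atom is recovered''. Minimality (the core property) is what guarantees that a head-preserving endomorphism is surjective on atoms, so I must set up the subquery--equivalence argument carefully and make sure $\theta$ is injective so that $\theta^{-1}$ is well defined.

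For item~\ref{Claim-Instantiations:Are:Complete}, I would first record a general fact: for any substitution $h$ and any instance $D$, $h(T_\C(D)) \incl T_\C(h(D))$, obtained by pushing the assignment witnessing each produced atom through $h$ (constants and the conditions of the TC statements are preserved). Let $\theta'$ be the freezing substitution of $\alpha Q$ and define $h$ on terms of $\cdb Q$ by $h(\theta X) = \theta'(\alpha X)$ for each variable $X$ of $Q$ and $h(c) = c$ for constants; then $h(\cdb Q) = \cdb{\alpha Q}$. Since $Q$ is minimal and complete, item~\ref{Claim-All:Atoms:Mapped} gives $T_\C(\cdb Q) = \cdb Q$, so
\[
\cdb{\alpha Q} = h(\cdb Q) = h(T_\C(\cdb Q)) \incl T_\C(h(\cdb Q)) = T_\C(\cdb{\alpha Q}).
\]
Finally, although $\alpha Q$ may fail to be minimal, only the easy direction ``$\Leftarrow$'' of item~\ref{Claim-All:Atoms:Mapped} is needed, and that direction holds for arbitrary conjunctive queries; applying it to $\alpha Q$ gives $\C \models \qc{\alpha Q}$, as required.
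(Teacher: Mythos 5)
Your proof is correct and follows essentially the same route as the paper: for item~\ref{Claim-All:Atoms:Mapped} you unfreeze the image of the completeness witness inside $T_\C(\cdb Q)$ to obtain a subquery of $Q$ that contains $Q$, and invoke minimality to force it to be all of $B$; for item~\ref{Claim-Instantiations:Are:Complete} you push the witnessing assignments through the instantiation, which is exactly the paper's construction of the assignment $\theta\alpha\theta^{-1}\beta$, only packaged as the standalone commutation fact $h(T_\C(D)) \incl T_\C(h(D))$. The only (cosmetic) difference is this modular packaging, plus your correct care about the injectivity of the freezing substitution $\theta$, which the paper leaves implicit.
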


\begin{proof}
Claim 1.
$(\Leftarrow)$ This direction holds for all conjunctive queries:
if $D_Q \subseteq T_\C(D_Q)$, then the freezing mapping $\theta$ trivially satisfies
$\theta B\subseteq D_Q$, hence, $\theta$ is an assignment that satisfies $B$ over $D_Q$
and returns $\theta \bar u$ as an answer.
Thus, $\theta\bar u\in Q(D_Q) \subseteq Q(T_\C(D_Q))$, so that $Q$ is complete wrt $\C$
according to Theorem~\ref{theo-plain:reasoning}.

$(\Rightarrow)$
Let $\tilde B$ be the set of atoms obtained by unfreezing $T_\C(D_Q)$, that is, 
$\tilde B = \theta^{-1}T_\C(D_Q)$,
and define $\tilde Q$ by the rule $\query{\tilde Q(\tpl u)}{\tilde B}$.
Note that $\tilde Q$ is a subquery of $Q$.
Since $Q$ is complete, Theorem~\ref{theo-plain:reasoning} implies that $\theta \bar u \in Q(T_\C(D_Q))$.
It follows that all variables in $\tpl u$ occur in $\tilde B$ so that
the query $\query{\tilde Q(\tpl u)}{\tilde B}$ is safe.
It also follows that $\theta \bar u \in Q(\theta \tilde B) = Q(D_{\tilde Q})$,
which by Proposition~\ref{prop-containment} entails that $\tilde Q\qcont Q$.
However, as $Q$ is minimal, this is only possible if $B = \tilde B$, so that
$D_Q = \theta B = \theta\tilde B = T_\C(D_Q)$.

Claim 2.
Since the \quotes{$\Leftarrow$} direction of Claim~1 holds for all CQs,
it suffices to show that $D_{\alpha Q}\subseteq T_\C(D_{\alpha Q})$, which amounts to showing
$\theta\alpha B \subseteq T_\C(\theta \alpha B)$.
To this end, let $R(\tpl t)$ be an atom of $B$.
Then $R(\theta \tpl t) = \theta R(\tpl t) \in T_\C(\theta B)$ by Claim 1.
This is only possible if there exists a TC statement $C = \tc{R(\tpl s)}G$ that is applicable to $\theta R(\tpl t)$.
Applicability implies there is an assignment $\beta$ such that
$R(\beta \tpl s)  = R(\theta \tpl t)$ and $\beta G \subseteq \theta B$.

We can use $C$ also to map $\theta\alpha R(\tpl t) = R(\theta\alpha\tpl t)$ from $\theta\alpha B$,
using instead of $\beta$ the assignment $\theta\alpha\theta^{-1}\beta$.
This new assignment essentially instantiates $\beta$ by $\alpha$,
but also unfreezes the frozen variables introduced by $\beta$
and finally freezes again all variables.

With this assignment we have
$R(\theta\alpha\theta^{-1}\beta\tpl s) = R(\theta\alpha\theta^{-1}\theta\tpl t)
= R(\theta\alpha\tpl t) \in \theta\alpha B$.
Moreover, since $\beta G \subseteq \theta B$, we also have
$\theta\alpha\theta^{-1}\beta G \subseteq \theta\alpha\theta^{-1}\theta B = \theta\alpha B$.
Consequentely, $C$ is applicable to $\theta\alpha R(\tpl t)$ in $\theta\alpha B$ and
$\theta\alpha R(\tpl t) \in \theta\alpha B = D_{\alpha Q}$.
\end{proof}

Note that Claim~2  does not hold for non-minimal queries.
For example, for the non-minimal query $Q(X) \la R(X,a), R(X,Y)$,
the TCS $C=\Compl(R(X,a);\true)$, and the substitution $\alpha = \set{Y\mapsto c}$
we have that $\set{C}\models \Compl(Q)$, but  $\set{C} \not \models \Compl(\alpha Q)$.

Now, we are ready to prove that it suffices to concentrate on subqueries when looking for MCGs.

\begin{proof}[Proof of Proposition~\ref{prop-MCGs:Are:Subqueries}]
Consider a conjunctive query $\query{Q(\tpl u)}B$ and a complete generalisation $\query{Q'(\tpl u')}{B'}$ of $Q$.
Since every conjunctive query is equivalent to a minimal one and
a query equivalent to a complete one is also complete, 
we assume without loss of generality that $Q'$ is minimal.

Since $Q\qcont Q'$, there exists a query homomorphism $\delta$ from $Q'$ to $Q$.
Let $B_0 = \delta B$ and define $Q_0$ as $\query{Q_0(\tpl X)}{B_0}$.
Then $Q_0 = \delta Q'$ and
using Claim~\ref{Claim-Instantiations:Are:Complete} of
Lemma~\ref{lemma-completeness:of:minimal:queries} we conclude that $Q_0$ is complete.

Since $Q_0$ is a subquery of $Q$, we have $Q\qcont Q_0$.
Moreover, since $Q'$ is minimal and $\delta$ is a substitution with  $Q_0 = \delta Q'$, we have $Q_0\qcont Q'$.
\end{proof}

While Proposition~\ref{prop-MCGs:Are:Subqueries} leaves us with all subqueries of $Q$ as candidates for MCGs,
we can still do better.
We will modify the monotonic $T_C$ operator on database instances
to a monotonic operator $\GenC$ on queries.
This will allow us to characterize completeness in terms of fixed points.
We will then conclude that MCGs are least fixed points,
are unique up to equivalence, and can be computed by fixed point iteration.

First we define the \emph{generalization operator\/} $\GenC$.
To make the mathematics in this section work,
we consider \emph{generalized conjunctive queries\/} 
$Q(\tpl u) \la B$, where we give up the safety condition that
every head variable (that is, every variable occurring in~$\tpl u$)
also occurs in $B$. 
As for conjunctive queries, the set of answers of a generalized 
conjunctive query $Q$ over an instance $D$ is defined as
$Q(D) = \set{ \alpha\tpl u \mid \alpha B \subseteq D }$, 
where $\alpha$ ranges over all assignments of domain values to variables of $Q$.
Note that if $Q$ is unsafe, that is, 
$Q$ has a head variable not occurring in $B$,
$D$ is an instance, 
and $\alpha$ satisfies $Q$ over $D$,
then for every domain value $c$ the modified assignment $\alpha[x/c]$ 
also satisfies $Q$ over $D$.
Consequently, an unsafe query can have an infinite answer set.
Clearly, unsafe queries are not of practical interest, 
but are needed for the mathematical development.

For a set of TC statements $\C$ 
we define the operator $\GenC$ as follows.
If $Q(\bar u) \la B$ is a generalized conjunctive query, 
then the query $\GenC(Q)$ is computed in four steps: 
\begin{enumerate}
 \item  freeze $B$ to obtain the database instance $D_Q$, 
 \item  compute $D_Q':= T_\C(D_Q)$,
 \item  unfreeze $D_Q'$ to obtain a set of atoms $B'\incl B$,
 \item  finally, set $\GenC(Q) := Q'$ where $\query{Q'(\tpl u)}{B'}$.
\end{enumerate}
Since $D_Q' = T_\C(D_Q) \incl D_Q$, it follows that $B'\incl B$.
Thus, $\GenC(Q)$ is a subquery of $Q$.
Intuitively, $\GenC$ keeps only those atoms of a query body
that are \quotes{complete} wrt $\C$.
Note that even if $Q$ is safe, $\GenC(Q)$ may be unsafe.

Now we point out properties of the $\GenC$ operator that will help us
answer the questions about MCGs posed above.
We first observe that the completeness characterization in
Theorem~\ref{theo-plain:reasoning} can be reformulated 
in terms of fixed points (modulo query equivalence) of~$\GenC$.

\begin{propositionx} 
   \label{prop-characterization:of:gc}
Let $\C$ be a set of TC statements and $Q$, $Q'$ 
be generalized conjunctive queries. Then,
\begin{enumerate}
\item $Q\qcont Q'  \quad\Longrightarrow\quad  \GenC(Q)\qcont\GenC(Q')$;
  \label{item:monotonicity}
\item $\C\models\compl Q \quad\Longleftrightarrow\quad  Q\equiv\GenC(Q)$. 
  \label{item:fixed-point}
\end{enumerate}
\end{propositionx}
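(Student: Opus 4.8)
The plan is to reduce both parts to the two characterizations already in hand — the containment characterization (Proposition~\ref{prop-containment}) and the completeness characterization (Theorem~\ref{theo-plain:reasoning}) — both read off frozen canonical databases. The bridge is the identity $\cdb{\GenC(Q)} = T_\C(\cdb Q)$: by construction the body of $\GenC(Q)$ is the unfreezing of $T_\C(\cdb Q)$, and refreezing it returns exactly those atoms. Since $\GenC(Q)$ is moreover a subquery of $Q$, we always have $Q \qcont \GenC(Q)$. I will use throughout that Proposition~\ref{prop-containment} extends verbatim to generalized (possibly unsafe) conjunctive queries: the Chandra--Merlin homomorphism argument is insensitive to the safety condition, head-only variables simply being frozen to fresh constants that the witnessing homomorphism must hit.

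\textbf{Item~\ref{item:fixed-point}.} I would prove this by a direct chain of equivalences. Because $Q \qcont \GenC(Q)$ holds automatically, we have $Q \equiv \GenC(Q)$ iff $\GenC(Q) \qcont Q$. Applying Proposition~\ref{prop-containment} to $\GenC(Q) \qcont Q$, with the variables of $\GenC(Q)$ frozen consistently with those of $Q$, this holds iff $\theta\tpl u \in Q(\cdb{\GenC(Q)})$, which by the bridge identity equals $\theta\tpl u \in Q(T_\C(\cdb Q))$. By Theorem~\ref{theo-plain:reasoning} this last statement is precisely $\C \models \compl Q$. Chaining the three equivalences gives $\C \models \compl Q \IFF Q \equiv \GenC(Q)$.

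\textbf{Item~\ref{item:monotonicity}.} The one genuinely new ingredient is a substitution-compatibility property of $T_\C$: for every domain map $\sigma$ and instances $D_1, D_2$ with $\sigma(D_1) \incl D_2$, we have $\sigma(T_\C(D_1)) \incl T_\C(D_2)$. This is proved by preserving witnesses: if $R(\tpl t) \in T_\C(D_1)$ because some $C = \tc{R(\tpl s)}{G}$ and assignment $\beta$ satisfy $R(\beta\tpl s) = R(\tpl t)$ and $\beta(\{R(\tpl s)\} \cup G) \incl D_1$, then $\sigma\beta$ witnesses $\sigma R(\tpl t) \in T_\C(D_2)$, since $\sigma\beta(\{R(\tpl s)\} \cup G) \incl \sigma(D_1) \incl D_2$. (This is the inclusion direction underlying Claim~\ref{Claim-Instantiations:Are:Complete} of Lemma~\ref{lemma-completeness:of:minimal:queries}, now for an arbitrary $\sigma$ and with no minimality assumption.) Granting this, assume $Q \qcont Q'$ and pick, via Proposition~\ref{prop-containment}, a query homomorphism $h$ from $Q'$ to $Q$, i.e.\ a substitution with $h(B') \incl B$ and $h({\tpl u}') = \tpl u$. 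Let $\theta, \theta'$ freeze $Q, Q'$ and set $\sigma = \theta \, h \, (\theta')^{-1}$, extended by the identity elsewhere. Then $\sigma(\cdb{Q'}) = \theta\, h(B') \incl \theta B = \cdb Q$, so the property yields $\sigma(T_\C(\cdb{Q'})) \incl T_\C(\cdb Q)$. Unfreezing, this says exactly that $h$ maps the body of $\GenC(Q')$ into the body of $\GenC(Q)$; together with $h({\tpl u}') = \tpl u$, this makes $h$ a homomorphism from $\GenC(Q')$ to $\GenC(Q)$, so $\GenC(Q) \qcont \GenC(Q')$ by Proposition~\ref{prop-containment}.

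\textbf{Main obstacle.} The technical heart is the substitution-compatibility property together with the freeze/unfreeze bookkeeping that turns the query homomorphism $h$ into the domain map $\sigma$ and back; this is where one must be careful that frozen constants are handled consistently across $Q$ and $Q'$. A secondary point to verify is that Proposition~\ref{prop-containment} is indeed valid for unsafe generalized queries, since $\GenC$ can produce unsafe queries even from safe input. By contrast, Item~\ref{item:fixed-point} is then essentially bookkeeping layered on top of the two cited characterizations.
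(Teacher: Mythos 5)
Your proposal is correct and follows essentially the same route as the paper's (sketched) proof: the paper likewise derives Item~\ref{item:fixed-point} by identifying the completeness condition $\theta\tpl u \in Q(T_\C(\cdb Q))$ with the containment $\GenC(Q)\qcont Q$ and combining it with the automatic converse containment, and proves Item~\ref{item:monotonicity} by showing that the Chandra--Merlin homomorphism $\delta$ from $Q'$ to $Q$ restricts to one from $\GenC(Q')$ to $\GenC(Q)$ because $A\in T_\C(D_{Q'})$ implies $\theta\delta A\in T_\C(D_Q)$. Your explicit substitution-compatibility lemma for $T_\C$ is exactly the witness-preservation step the paper leaves implicit, so your write-up is a faithful (and more detailed) elaboration of the paper's proof idea.
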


\begin{proof}[Proof Idea]
Claim~\ref{item:monotonicity} can be shown using the fact that
the containment \quotes{$Q\qcont Q'$}
implies the existence of a homomorphism $\delta$ from $Q'$ to $Q$
such that $\delta Q'\subseteq Q$
\cite{Chandra-CQ-containment-77}.
For each atom $A \in Q'$,
if $A \in T_\C(D_{Q'})$ then $\theta \delta A \in T_\C(D_Q)$.
Thus, $\delta$ is also a homomorphism from $\GenC(Q')$ to $\GenC(Q)$.

To see Claim~\ref{item:fixed-point},
note that the characterizing condition for completeness in
Theorem~\ref{theo-plain:reasoning},
\quotes{$\theta\tpl u \in Q(T_\C(\cdb Q))$,}
is equivalent to the existence of a query homomorphism from $Q$ to $\GenC(Q)$
and thus to the containment $\GenC(Q)\qcont Q$.
Since, according to Claim~\ref{item:monotonicity}, the converse containment 
$Q\qcont \GenC(Q)$
holds anyway, this yields Claim~\ref{item:fixed-point}.
\end{proof}

We are now in a position to prove that there is at most one MCG.
As the proof solely applies simple principles of order theory,
we first fix a suitable vocabulary.
We remind the reader that a \emph{preorder\/} is a reflexive and transitive binary relation $\preceq$
on some set $S$.
A \emph{least element\/} of $S$ is an $l \in S$ such that $l\preceq s$ for all $s\in S$ (note that $S$ may admit several least elements).
The \emph{equivalence relation} $\approx$ \emph{induced by} $\preceq$ is defined over $S$ by $s\approx s'$ iff $s\preceq s'$ and $s'\preceq s$.
A function $f\col S\to S$ is \emph{monotonic\/} if $f(s)\preceq f(s')$ whenever $s\preceq s'$.
An element $s\in S$ is a \emph{fixed point of} $f$ \emph{modulo} $\approx$ if $f(s) \approx s$.
With $f^0$ we denote the identity function on $S$ while $f^i$, where $i\geq 0$,
denotes the $(i-1)$-fold composition of $f$ with itself.
The following lemma is folklore, but is usually expressed for orders (i.e. antisymmetric preorders),
so we write it explicitly for the sake of self-containment:
\begin{lemmax}
  \label{lem-least:fixpoints}
  Let $S$ be a finite set with preorder $\preceq$ and a least element $\bot$, and
  let $f\col S\to S$ be a monotonic function.
  Then
  \begin{enumerate}
  \item $f$ has a least fixed point $p_0$ modulo $\approx$, and
  \item $p_0 = f^{k}(\bot)$, where $k = \min\set{ i \mid f^{i+1}(\bot) \preceq f^i(\bot)}$.
  \end{enumerate}
\end{lemmax}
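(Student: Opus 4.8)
The plan is to run the standard Kleene iteration from $\bot$ and to check, carefully, that working in a preorder rather than a partial order costs nothing once everything is read modulo $\approx$. First I would establish that the sequence $(f^i(\bot))_{i\geq 0}$ is \emph{ascending}. Since $\bot$ is a least element we have $\bot\preceq f(\bot)$, that is $f^0(\bot)\preceq f^1(\bot)$; applying the monotonic $f$ and inducting gives $f^i(\bot)\preceq f^{i+1}(\bot)$ for every $i$, and hence by transitivity $f^i(\bot)\preceq f^j(\bot)$ whenever $i\leq j$.

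Next I would show that the index $k$ is well defined, which is exactly where the preorder (as opposed to an order) needs a little care. Since $S$ is finite, the elements $f^0(\bot),f^1(\bot),\ldots$ cannot be pairwise distinct, so there are indices $a<b$ with $f^a(\bot)=f^b(\bot)$ as elements of $S$. Combining this equality with the ascending chain yields $f^{a+1}(\bot)\preceq f^b(\bot)=f^a(\bot)$, so the set $\set{ i \mid f^{i+1}(\bot)\preceq f^i(\bot)}$ is nonempty and its minimum $k$ exists. (Equivalently one may pass to the quotient poset $S/{\approx}$, in which a weakly ascending chain over a finite set must stabilize.) Putting $p_0:=f^k(\bot)$ and combining $f^{k+1}(\bot)\preceq f^k(\bot)$ with the ascending inequality $f^k(\bot)\preceq f^{k+1}(\bot)$ gives $f(p_0)=f^{k+1}(\bot)\approx f^k(\bot)=p_0$, so $p_0$ is a fixed point of $f$ modulo $\approx$. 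This already secures the existence part of Claim~1 and pins down the explicit form claimed in Claim~2.

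It then remains to check that $p_0$ is a \emph{least} fixed point modulo $\approx$. Let $q$ be any fixed point, so $f(q)\approx q$ and in particular $f(q)\preceq q$. I would prove by induction that $f^i(\bot)\preceq q$ for all $i$: the base case $f^0(\bot)=\bot\preceq q$ holds because $\bot$ is least, and for the step, monotonicity together with the induction hypothesis and $f(q)\preceq q$ gives $f^{i+1}(\bot)=f(f^i(\bot))\preceq f(q)\preceq q$. Taking $i=k$ yields $p_0\preceq q$, as required. The only genuine obstacle is the preorder subtlety in the middle paragraph, namely guaranteeing that $k$ exists and that the iteration stabilizes modulo $\approx$ rather than on the nose; the rest is the textbook fixed-point argument with every use of $=$ replaced by $\approx$ and antisymmetry never invoked.
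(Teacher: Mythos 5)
Your proof is correct and follows essentially the same route as the paper's: build the ascending chain $f^i(\bot)\preceq f^{i+1}(\bot)$, use finiteness of $S$ to find the first $k$ with $f^{k+1}(\bot)\preceq f^k(\bot)$ (hence $\approx$), and then show leastness by comparing the iterates against an arbitrary fixed point. Your justification that $k$ exists (exhibiting $a<b$ with $f^a(\bot)=f^b(\bot)$) is slightly more explicit than the paper's appeal to the sequence \quotes{entering a loop}, but the argument is the same.
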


\begin{proof}
Since $\bot$ is a least element, $\bot\preceq f(\bot)$.
Because $f$ is monotonic, this implies $f(\bot) \preceq f(f(\bot))$.
Inductively, we conclude that $f^i(\bot) \preceq f^{i+1}(\bot)$ for all~$i\geq 0$.
Next, because $S$ is finite, 
the sequence $f^i(\bot)$ must enter a loop.
So there exists a least $k\geq 0$ such that 
$f^{k+1}(\bot) \preceq f^{k}(\bot)$.
Together with the fact that 
$f^{k}(\bot) \preceq f^{k+1}(\bot)$,
this implies that
$f^{k}(\bot) \approx f^{k+1}(\bot)$,
therefore 
$p_0 = f^k(\bot)$ is a fixed point modulo $\approx$.

Let $p$ be another fixed point modulo $\approx$. Then $\bot \preceq p$,
and consequently (since $f$ is monotonic), $p_0 = f^k(\bot) \preceq f^k(p) \approx p$.
So $p_0$ is a least fixed point modulo $\approx$.
\end{proof}

We can use Lemma~\ref{lem-least:fixpoints} to rephrase Proposition~\ref{prop-characterization:of:gc},
using the operator $\GenC$ as the function $f$,
and leveraging the fact that containment is a preorder on CQs.
Precisely, for a given $Q$, we denote the set of subqueries of $Q$,
including those that are not safe, as $\SUB Q$.
Clearly, $\SUB Q$ is finite,
containment is a preorder on $\SUB Q$ (whose induced equivalence relation is query equivalence),
$Q$ is a least element of $\SUB Q$, and
$\GenC$ is a monotonic function on $\SUB Q$.
Thus, all prerequisites of Lemma~\ref{lem-least:fixpoints} are satisfied
and we conclude the following proposition.

\begin{propositionx}
  \label{prop-generalization1}
Let $Q$ be a conjunctive query and $\C$ be a set of TC statements. Then

\vspace{-1ex}

\begin{enumerate}[label=(\alph*)]
\item\label{Claim-fixed:point} 
$\GenC$ has a least fixed point (modulo $\equiv$) $\tilde Q$ in $\SUB Q$ ;
        
\item\label{Claim-completeness} 
$\tilde Q$ is complete;
        
\item\label{Claim-linear:number:of:steps} 
$\tilde Q = {\GenC}^n(Q)$ where $n$ is less or equal than the number of atoms in $Q$;
        
\item \label{Claim-minimality} 
if $Q'$ is any complete generalization of $Q$, then $\tilde Q \qcont Q'$;
       
\item \label{Claim-MCG:if:safe} 
if $\tilde Q$ is safe, then $\tilde Q$ is the MCG of $Q$ wrt $\C$,\ \\
      otherwise no complete generalisation of $Q$ wrt $\C$ exists.
     
\end{enumerate}
\end{propositionx}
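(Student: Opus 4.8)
The plan is to derive all five claims from the order-theoretic Lemma~\ref{lem-least:fixpoints}, instantiated with the finite set $S=\SUB Q$, the preorder $\qcont$, the least element $\bot=Q$, and the monotonic function $f=\GenC$ — all four hypotheses having been checked in the paragraph preceding the statement. For Claim~\ref{Claim-fixed:point} I would simply invoke part~1 of Lemma~\ref{lem-least:fixpoints} to obtain the least fixed point $\tilde Q$ of $\GenC$ modulo $\equiv$. For Claim~\ref{Claim-completeness}, being a fixed point means $\tilde Q\equiv\GenC(\tilde Q)$, so Claim~\ref{item:fixed-point} of Proposition~\ref{prop-characterization:of:gc} immediately gives $\C\models\compl{\tilde Q}$.

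For Claim~\ref{Claim-linear:number:of:steps} I would use part~2 of Lemma~\ref{lem-least:fixpoints}, which writes $\tilde Q=\GenC^k(Q)$ with $k=\min\set{i \mid \GenC^{i+1}(Q)\qcont\GenC^i(Q)}$, and then bound $k$ by the number of atoms of $Q$. The key observation is that $\GenC$ always returns a subquery, so the number of body atoms is non-increasing along $Q,\GenC(Q),\GenC^2(Q),\dots$; moreover, as soon as two consecutive queries share the same atom count they must be syntactically equal (a subquery dropping no atom), hence equivalent, so the inequality defining $k$ is met. Thus the atom count strictly decreases at every step before $k$, forcing $k$ to be at most the number of atoms of $Q$.

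Claim~\ref{Claim-minimality} is where the earlier structural result does the work. Given any complete generalization $Q'$, Proposition~\ref{prop-MCGs:Are:Subqueries} supplies a subquery $Q_0\in\SUB Q$ that is itself a complete generalization of $Q$ with $Q_0\qcont Q'$. Since $Q_0$ is complete, Claim~\ref{item:fixed-point} of Proposition~\ref{prop-characterization:of:gc} makes it a fixed point of $\GenC$ modulo $\equiv$, so minimality of $\tilde Q$ (part~1 of Lemma~\ref{lem-least:fixpoints}) yields $\tilde Q\qcont Q_0\qcont Q'$, hence $\tilde Q\qcont Q'$.

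Finally, Claim~\ref{Claim-MCG:if:safe} splits on safety of $\tilde Q$. If $\tilde Q$ is safe it is a genuine CQ; it is complete (by~\ref{Claim-completeness}) and a generalization of $Q$ (being a subquery), and Claim~\ref{Claim-minimality} shows no complete generalization can be strictly contained in it, so it is an MCG, unique up to $\equiv$. The delicate case, and the step I expect to be the main obstacle, is showing that if $\tilde Q$ is unsafe then no complete generalization exists at all. Here the plan is to argue that an unsafe generalized query cannot be contained in any safe CQ: if $\tilde Q$ has a head variable $x$ absent from its body, then freezing the body already satisfies $\tilde Q$, and reassigning $x$ to arbitrary domain constants produces infinitely many distinct answers over the finite instance $\cdb{\tilde Q}$, whereas any safe CQ has only finitely many answers over a finite instance; hence $\tilde Q(\cdb{\tilde Q})\not\incl Q'(\cdb{\tilde Q})$ and so $\tilde Q\not\qcont Q'$. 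Since generalizations are by assumption safe CQs, Claim~\ref{Claim-minimality} would otherwise force $\tilde Q\qcont Q'$ for any complete generalization $Q'$, a contradiction; thus none exists. I expect pinning down this infinite-answer argument — that the containment $\tilde Q\qcont Q'$ is genuinely impossible — to require the most care.
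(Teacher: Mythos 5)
Your proposal is correct and follows essentially the same route as the paper: both derive claims (a)--(d) from Lemma~\ref{lem-least:fixpoints} together with Propositions~\ref{prop-characterization:of:gc} and~\ref{prop-MCGs:Are:Subqueries}, and both bound the iteration count by noting that each non-stabilizing application of $\GenC$ drops at least one atom. The only difference is presentational: for the unsafe case of (e) you spell out, via the infinite-answer argument, why an unsafe query cannot be contained in a safe CQ, a step the paper leaves implicit when it asserts that unsafety of $\tilde Q$ rules out any safe complete subquery.
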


\begin{proof}
Claim~\ref{Claim-fixed:point} immediately follows from Lemma~\ref{lem-least:fixpoints},
while Claim~\ref{Claim-completeness} holds because fixed points of $\GenC$ are complete (by Proposition~\ref{prop-characterization:of:gc}).
Claim~\ref{Claim-linear:number:of:steps} holds because the application of $\GenC$
to an incomplete query causes the removal of at least one atom.
Claim~\ref{Claim-minimality} holds, since by Proposition~\ref{prop-MCGs:Are:Subqueries},
every complete generalization of $Q$ has some complete query $Q_0\in\SUB Q$ as a subquery,
and by Proposition~\ref{prop-characterization:of:gc},
$\tilde Q$ is contained in every complete query $Q_0\in\SUB Q$.
Finally, if $\tilde Q$ is not safe, then there is no safe subquery of $Q$ that is complete
and therefore no complete generalisation of $Q$.
Hence, Claim~\ref{Claim-MCG:if:safe} holds.
\end{proof}

A possible algorithm suggested by Proposition~\ref{prop-generalization1} would repeatedly apply the $\GenC$ operator to $Q$,
producing a sequence of subqueries $Q_i = {\GenC}^i(Q)$,
and stop when $Q_{i+1} \qcont Q_i$.
Alternatively, the termination condition could be a check that $Q_i$ is complete.
A third possibility (which we use below in Algorithm~\ref{alg:generalization})
is
to stop when $Q_{i+1} = Q_i$, that is,
when $\GenC$ does not remove any atom from $Q_i$.
The following result shows that this is correct: 

\begin{propositionx}
  \label{prop-generalization2}
Let $Q$ be a CQ, $\C$ be a set of TC statements, and $Q_i = {\GenC}^i(Q)$ for $i\geq 0$.
If $Q_{i+1}  = Q_i$, then $Q_i$ is a least fixed point of $\GenC$ modulo $\equiv$ in $\SUB Q$.
\end{propositionx}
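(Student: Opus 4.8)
The plan is to show that syntactic stabilization of the iteration lands in the same $\equiv$-class as the containment-based least fixed point already guaranteed by Proposition~\ref{prop-generalization1}, Claim~\ref{Claim-fixed:point}. Write $\tilde Q$ for that least fixed point; by Lemma~\ref{lem-least:fixpoints} we have $\tilde Q = Q_k$, where $k = \min\set{j \mid Q_{j+1}\qcont Q_j}$. First I would record the easy half: since $\GenC(Q_i) = Q_{i+1} = Q_i$, the query $Q_i$ is a genuine (syntactic) fixed point of $\GenC$, hence in particular a fixed point modulo $\equiv$; and because $\tilde Q$ is least among such fixed points, $\tilde Q \qcont Q_i$.

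The main work is the reverse containment $Q_i \qcont \tilde Q$. I would first argue $i \geq k$: from $Q_{i+1} = Q_i$ we get $Q_{i+1}\equiv Q_i$, so in particular $Q_{i+1}\qcont Q_i$, whence $i$ satisfies the defining condition of $k$ and minimality gives $k \leq i$. Next I would show that the sequence is constant modulo $\equiv$ from index $k$ onward. By construction each $\GenC$-step only drops atoms, so $Q_{j+1}$ is a subquery of $Q_j$ and thus more general, giving $Q_k \qcont Q_{k+1}$; conversely $Q_{k+1}\qcont Q_k$ holds by the definition of $k$, so $Q_k \equiv Q_{k+1}$. Since $\GenC$ preserves equivalence—immediate from the monotonicity of Proposition~\ref{prop-characterization:of:gc}, Claim~\ref{item:monotonicity}, applied in both directions—a straightforward induction yields $Q_j \equiv Q_k$ for every $j \geq k$. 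Taking $j = i$ gives $Q_i \equiv Q_k = \tilde Q$, and in particular $Q_i \qcont \tilde Q$.

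Combining the two containments gives $Q_i \equiv \tilde Q$. Finally I would conclude that $Q_i$ is itself a least fixed point of $\GenC$ modulo $\equiv$: it is a fixed point by the first step, and for any fixed point $p$ we have $Q_i \qcont \tilde Q \qcont p$, using that $\tilde Q$ is least, so $Q_i$ lies below every fixed point as well.

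The step I expect to be delicate is distinguishing the two notions of stabilization. Containment-stabilization at index $k$ does \emph{not} force the syntactic equality $Q_{k+1}=Q_k$: the operator $\GenC$ may keep deleting redundant atoms while remaining within a single $\equiv$-class, so the syntactic fixed point can occur strictly later, with $i>k$. The crux is therefore the claim that $\GenC$ is constant modulo $\equiv$ on the tail $\set{Q_j \mid j\geq k}$, which is exactly where preservation of equivalence under $\GenC$ is used; everything else is routine bookkeeping within the preorder $(\SUB Q,\qcont)$.
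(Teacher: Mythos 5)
Your proof is correct and follows essentially the same route as the paper's: identify the first containment-stabilization index $k$ from Lemma~\ref{lem-least:fixpoints}, observe that $Q_{i+1}=Q_i$ forces $i\geq k$, and conclude $Q_i\equiv Q_k=\tilde Q$. The only difference is that you explicitly justify the tail-constancy claim $Q_j\equiv Q_k$ for $j\geq k$ (via preservation of $\equiv$ under $\GenC$ and induction), which the paper asserts without detail by appeal to Proposition~\ref{prop-generalization1}; your version is a welcome elaboration rather than a different argument.
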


\begin{proof}
By Proposition~\ref{prop-generalization1}, repeatedly applying $\GenC$ to $Q$ leads to a first $Q_k$
that is a fixed point modulo $\equiv$,
this query $Q_k$ is a least fixed point modulo $\equiv$,
and $Q_j \equiv Q_k$ for each $j\geq k$.
If $Q_i = Q_{i+1}$, then $i\geq k$,
so $Q_i \equiv Q_k$, therefore $Q_i$ is also
a least fixed point modulo $\equiv$.
\end{proof}

\paragraph{Generalization Algorithm}
Following Proposition~\ref{prop-generalization1}, we construct Algorithm~\ref{alg:generalization}, which computes an MCG for a query $Q$ and TCSs $C$.

\begin{algorithm}
 \SetKwInOut{Input}{Input}\SetKwInOut{Output}{Output}
 \Input{\ query $\query{Q(\tpl u)}B$, set of TCSs $\C$}
 \Output{\ MCG of $Q$ if one exists; otherwise~$\mathtt{null}$}
 $Q_{\mathit{old}} := Q$, $Q_{\mathit{new}} = \GenC(Q)$;\\
 \While{$Q_{\mathit{new}}$ is safe and $Q_{\mathit{new}} \neq Q_{\mathit{old}}$}{
  $Q_{\mathit{old}} := Q_{\mathit{new}}$\\
  $Q_{\mathit{new}} := \GenC(Q_{\mathit{old}})$\\
 }
 \eIf{$Q_{\mathit{new}}$ is not safe}{
   \KwRet{$\mathtt{null}$}
  }
  {
   \KwRet{$Q_{\mathit{new}}$}
  }
 \caption{Computes an MCG if one exists}
 \label{alg:generalization}
\end{algorithm}


\paragraph*{Computational Complexity}
  \label{gen:par:complexity}
We study two decision problems related to generalizing a (possibly) incomplete CQ.
The first problem estimates the cost of the $\GenC$ operator.
We show that this problem is complete for the complexity class $\DP$,
which intuitively consists of all problems
that can be decided by an algorithm that 
performs two calls to an NP-oracle:
the algorithm responds \quotes{yes} if the first oracle call returns \quotes{yes} and 
the second one returns \quotes{no}.

\begin{propositionx}
\label{prop:gen-dp-complete}
For a set $\C$ of TC statements and two CQs $Q$ and $Q'$,
deciding whether $Q' = \GenC(Q)$ is $\DP$-complete.
\end{propositionx}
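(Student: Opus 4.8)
The plan is to read $Q'=\GenC(Q)$ as a \emph{syntactic} equality and split it into an \NP part and a \coNP part, placing the problem in $\DP$, and then to match this by a reduction that encodes one positive and one negative instance of an \NP-complete problem. By definition the body of $\GenC(Q)$ consists of exactly those atoms $A$ of the body $B$ of $Q$ whose frozen copy belongs to $T_\C(\cdb Q)$; call such an $A$ \emph{kept}. Whether a single atom $A$ is kept is an \NP question: $A$ is kept iff some $C=\tc{R(\tpl s)}{G}$ in $\C$ is applicable to the frozen $A$, i.e.\ there is an assignment $\beta$ with $R(\beta\tpl s)$ equal to the frozen $A$ and $\beta G\subseteq\cdb Q$ — a polynomial-size certificate. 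Hence ``$A$ kept'' is in \NP and ``$A$ not kept'' is in \coNP. Now $Q'=\GenC(Q)$ holds iff (a) $Q'$ and $Q$ have the same head and the body $B'$ of $Q'$ satisfies $B'\subseteq B$ (both poly-time), (b) every atom of $B'$ is kept, and (c) every atom of $B\setminus B'$ is not kept. Condition (b) is a conjunction of polynomially many \NP checks (guess all witnesses simultaneously), hence \NP; condition (c) is \coNP. Letting $L_1$ be the language of (a)$\wedge$(b) and $L_2$ the language of (c), the problem equals $L_1\cap L_2$ with $L_1$ in \NP and $L_2$ in \coNP, which is membership in $\DP$.

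For hardness I would reduce from the canonical $\DP$-complete problem built from an \NP-complete problem $\Pi$: given two instances $x_1,x_2$, decide whether $x_1\in\Pi$ and $x_2\notin\Pi$. I take $\Pi$ to be Boolean conjunctive-query evaluation, which is \NP-complete in combined complexity \cite{Chandra-CQ-containment-77}; an instance $(q,D)$ asks whether $q$ maps homomorphically into $D$. Given $(q_1,D_1)$ and $(q_2,D_2)$, I rename relation symbols so the two instances share none, and add fresh unary symbols $P_1,P_2$ and constants $c_1,c_2$. Let $Q$ be the Boolean query whose body lists $P_1(c_1)$, $P_2(c_2)$ and all facts of $D_1$ and $D_2$; let $\gamma_i$ be the conjunction of body atoms of $q_i$; and set $\C=\set{\tc{P_1(S)}{\gamma_1},\,\tc{P_2(S)}{\gamma_2}}$. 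All these atoms are ground, so freezing is the identity and $\cdb Q$ is just this fact set. Because $\gamma_i$ uses only the (disjoint) symbols of instance $i$, the condition of $\tc{P_i(S)}{\gamma_i}$ matches in $\cdb Q$ iff $q_i$ maps into $D_i$; thus $P_i(c_i)$ is kept iff $D_i\models q_i$, while no fact of $D_1,D_2$ is ever kept, since $\C$ mentions only $P_1,P_2$. Taking $Q'$ to be the Boolean query with body $\set{P_1(c_1)}$, we obtain $Q'=\GenC(Q)$ iff $P_1(c_1)$ is kept and $P_2(c_2)$ is not, i.e.\ iff $D_1\models q_1$ and $D_2\not\models q_2$, which is exactly the source instance; the reduction is clearly polynomial.

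The routine parts are the membership decomposition and the identification of ``kept'' with applicability of a TC statement. The main obstacle is controlling the hardness construction so that the syntactic equality with $Q'$ expresses precisely the conjunction of a positive and a negative evaluation instance: I must ensure the two embedded instances cannot interfere (achieved by disjoint relation symbols) and that the only atoms $\GenC$ can ever retain are the two anchors $P_1(c_1),P_2(c_2)$ (achieved by letting $\C$ speak only about $P_1$ and $P_2$), so that dropping all database facts is forced and harmless regardless of the two sub-instances.
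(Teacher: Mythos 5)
Your proof is correct. The membership argument coincides with the paper's: you decide, atom by atom, whether a body atom of $Q$ is retained by $\GenC$ (an \NP{} check, since a retained atom is witnessed by an applicable TC statement together with a polynomial-size assignment), and you split $Q'=\GenC(Q)$ into the \NP{} condition \enquote{all atoms of $Q'$ are retained} and the \coNP{} condition \enquote{no dropped atom is retained}; your syntactic reading of \enquote{$=$} is the natural one and makes the poly-time part (a) unproblematic. For hardness you diverge from the paper: the paper reduces from \emph{Critical 3-colorability} (is $G$ non-3-colorable while every one-edge-deleted subgraph is 3-colorable?), encoding the graph's colorability and each subgraph's colorability as the keptness of propositional anchor atoms $\test$ and $\test_{(i,j)}$, with an extra statement $\Compl(\edge(X,Y);\true)$ to keep the coloring facts; you instead reduce from the generic SAT--UNSAT-style pair problem instantiated with Boolean CQ evaluation, using exactly two anchors $P_1(c_1),P_2(c_2)$ over disjoint vocabularies and letting all database facts be dropped. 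Both source problems are $\DP$-complete and both reductions are sound; yours is more modular (a single positive and a single negative \NP{} instance, no auxiliary \enquote{always complete} statement), while the paper's exercises the construction on a concrete named $\DP$-complete problem. The only point worth stating explicitly in your write-up is the standard fact you rely on, namely that for an \NP-complete $\Pi$ the language $\set{(x_1,x_2)\mid x_1\in\Pi,\ x_2\notin\Pi}$ is $\DP$-complete (both $L_1$ and the complement of $L_2$ reduce to $\Pi$); with that noted, the argument is complete.
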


Our second decision problem estimates the cost of computing an MCG.
We show that this problem is in $\textrm P^{\NP}$,
which is the class of problems
that can be decided in polynomial time 
by a Turing machine that uses an $\NP$-oracle.%
\footnote{An example of a complete problem problem for $\textrm P^{\NP}$,
provided by Krentel (1988), is the lexicographically last satisfying assignment of a Boolean formula:
given a Boolean formula $\phi(X_1,\dots,X_n)$, decide whether variable $X_n$ takes value $1$ in the lexicographically largest satisfying assignment for $\phi$.}
We do not know whether the problem is also $\textrm P^{\NP}$-hard.

\begin{propositionx}
For a set $\C$ of TC statements and two CQs $Q$ and $Q'$,
deciding whether $Q'$ is an MCG of $Q$ wrt $\C$ is in $\textrm P^{\NP}$.
\end{propositionx}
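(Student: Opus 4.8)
The plan is to exploit the uniqueness of the MCG established in Proposition~\ref{prop-generalization1}. Whenever a complete generalization of $Q$ exists, the least fixed point $\tilde Q = {\GenC}^n(Q)$ is \emph{the} MCG, unique up to equivalence, and when $\tilde Q$ is unsafe no complete generalization of $Q$ exists at all. The decision problem therefore reduces to three tasks: \emph{(i)} compute $\tilde Q$ from $\C$ and $Q$; \emph{(ii)} test whether $\tilde Q$ is safe; and \emph{(iii)} when it is, test whether $Q' \equiv \tilde Q$. The crucial gain of this reformulation is that it replaces the naive minimality condition---which quantifies over all candidate queries $Q''$ and is \emph{a priori} of higher complexity---by a single equivalence check against a query we are able to build.

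First I would compute $\tilde Q$ by fixed-point iteration. By claim~\ref{Claim-linear:number:of:steps} of Proposition~\ref{prop-generalization1}, at most $n$ applications of $\GenC$ suffice, where $n$ is the number of atoms of $Q$, so only linearly many iterations are required. Applying $\GenC$ to a current query $Q_i$ amounts to deciding, for each atom $A$ in the body of $Q_i$, whether its frozen image $\theta A$ belongs to $T_\C(D_{Q_i})$, since $\GenC(Q_i)$ retains exactly those atoms (recall $T_\C(D_{Q_i}) \incl D_{Q_i}$). Now $\theta A \in T_\C(D_{Q_i})$ holds iff some TC statement $C = \tc{R(\tpl s)}{G}$ in $\C$ is applicable, that is, iff there is a homomorphism from $R(\tpl s), G$ into $D_{Q_i}$ that produces $\theta A$; guessing such a $C$ together with the homomorphism and checking it in polynomial time places each test in $\NP$. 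Hence one $\NP$-oracle call per atom, and $O(n^2)$ calls overall, yield $\tilde Q$, with only polynomial work between calls.

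Testing safety of $\tilde Q$ is a purely syntactic polynomial-time check. If $\tilde Q$ is unsafe, claim~\ref{Claim-MCG:if:safe} of Proposition~\ref{prop-generalization1} tells us that no complete generalization, and hence no MCG, exists, so the algorithm rejects. If $\tilde Q$ is safe, it is the unique MCG modulo $\equiv$, and it remains to decide whether $Q' \equiv \tilde Q$. By Proposition~\ref{prop-containment} each of the two containments $Q' \qcont \tilde Q$ and $\tilde Q \qcont Q'$ is the existence of a query homomorphism, hence decidable by a single $\NP$-oracle call; the algorithm accepts iff both succeed. Altogether the procedure issues polynomially many oracle calls and otherwise runs in polynomial time, placing the problem in $\textrm{P}^{\NP}$.

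The main obstacle is conceptual rather than computational: a literal reading of the MCG definition combines a completeness test, a containment test, and a universally quantified minimality test, and handling the last one directly would not obviously stay within $\textrm{P}^{\NP}$. The argument is therefore front-loaded into two observations. First, $\tilde Q$ can be \emph{constructed}---not merely verified, which Proposition~\ref{prop:gen-dp-complete} shows is already $\DP$-complete for a single $\GenC$ step---through atom-by-atom $\NP$ queries whose number stays polynomial. Second, uniqueness collapses minimality to one equivalence test. A minor point to handle is that the intermediate queries $Q_i$ may be unsafe, so the homomorphism tests must be read in the generalized-CQ semantics; this does not affect the membership arguments above.
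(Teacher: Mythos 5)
Your proposal is correct and follows essentially the same route as the paper: run the fixed-point iteration of $\GenC$ (at most linearly many rounds, each round deciding atom retention with one $\NP$-oracle call per atom), reject if the result is unsafe, and otherwise reduce the minimality condition to a single equivalence test against the unique least fixed point via two more $\NP$-oracle calls. The paper phrases this as ``execute Algorithm~\ref{alg:generalization} and then check $Q_{\mathit{new}} \equiv Q'$,'' which is exactly your decomposition.
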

\begin{proof}[Proof sketch]
In order to decide whether $Q'$ is an MCG of $Q$,
it is sufficient to execute Algorithm~\ref{alg:generalization},
verify whether the output query $Q_{\mathit{new}}$ is \texttt{null}, and if it is not,
determine whether $Q_{\mathit{new}} \equiv Q'$.
The latter can be done with a call to an $\NP$ oracle,
since equivalence of CQs is known to be in $\NP$ (and the size of $Q_{\mathit{new}}$ is bounded by the size of $Q$).

Next, we observe that Algorithm~\ref{alg:generalization} performs a number of iterations
that is linear in the size of $Q$ (because the $\GenC$ operator intuitively discards atoms from the body of its input query).
So to complete the proof, it is sufficient to observe that each of these iteration can be executed in linear time,
assuming an $\NP$ oracle. 
More precisely, for any query $Q_\mathit{old}$,
the query $\GenC(Q_\mathit{old})$ can be computed by retaining certain atoms in the body $B$ of $Q_\mathit{old}$:
retain an atom $A$ iff there is a TCS $\Compl(A;G)$ in $\C$ and a substitution $\alpha$ such that $\alpha G \subseteq B$.
Deciding the existence of such a substitution is in $\NP$, because \ei the size of $\alpha$ is bounded by the size of $G$ (so it can be encoded as a certificate with polynomial length),
and \eii $\alpha G \subseteq B$ can be trivially verified in polynomial time.



 \end{proof}

\section{Query Specialization}
\label{sec:query:specialization}
As discussed in the introduction, another common way to approximate a query is to make it more specific.
%
Analogously to the previous section,
we study the existence and computation of CQs that specialize $Q$ and are complete wrt a set of TCSs.
As in the preceding section, all queries mentioned are assumed to be conjunctive queries.

We say,
that $Q'$ is a \emph{complete specialization\/} (CS)
of $Q$ wrt a set $\C$ of TCSs if $Q' \sqsubseteq Q$ and $\C \models \compl {Q'}$.
Among those queries, the preferred ones are the \emph{maximal\/} ones:
\begin{definitionx}[Maximal Complete Specialization]
A query $Q'$ is a \emph{maximal complete specialization\/} (MCS) 
of $Q$ wrt a set $\C$ of TCSs if:  
\begin{itemize}
  \item $Q'$ is a CS of $Q$ wrt $\C$,
   and
  \item it is a maximal one, that is, there exists no query $Q''$ such that 
     $Q''$ is a CS of $Q$ and
     $Q'$ is strictly contained in $Q''$.

\end{itemize}
\end{definitionx}

%
%
%
%
%

Even though the definition of an MCS is symmetric to the one of an MCG,
it will turn out that there are important differences between these two settings
that require different approaches.
First, a query may have several non-equivalent 
MCSs (whereas it admits at most one MCG up to equivalence).
Second, a query may admit CSs but no MCS:
\begin{theoremx}\label{th:no_mcs}
There is a query $Q$ and set $\C$ of TCSs such that $Q$ admits CSs wrt $\C$, none of which is maximal.
\end{theoremx}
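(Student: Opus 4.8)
The plan is to exhibit a single explicit pair $(Q,\C)$ together with an infinite, strictly ascending family of complete specializations that is \emph{cofinal} in the containment order, so that no complete specialization can be maximal. I would take the Boolean query $Q() \la P(X)$ over a unary symbol $P$ and a binary symbol $E$, and the two TC statements $\Compl(E(X,Y);\true)$ and the \emph{self-referential} statement $\Compl(P(X);\,E(X,Y),P(Y))$. The first makes every $E$-atom complete; the second declares a $P$-fact complete as soon as it has an $E$-edge to another $P$-fact. For each $n\ge 1$ let $C_n$ be the Boolean query whose body is the directed $n$-cycle $P(X_0),E(X_0,X_1),P(X_1),\dots,E(X_{n-1},X_0)$ with all vertices in $P$. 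Choosing a \emph{Boolean} head is deliberate: it removes the head-induced rigidity that would otherwise let ``lasso-shaped'' specializations be maximal, and it lets containment be governed purely by homomorphisms between cycles.

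First I would verify, using Theorem~\ref{theo-plain:reasoning}, that each $C_n$ is a complete specialization. Freezing $C_n$ yields $D_{C_n}$; the operator $T_\C$ keeps all $E$-atoms (by $\Compl(E(X,Y);\true)$) and, via the self-referential statement, each $P(\theta X_i)$ is witnessed by its successor edge $E(\theta X_i,\theta X_{i+1})$ and the fact $P(\theta X_{i+1})$, so $T_\C(D_{C_n})=D_{C_n}$ and $C_n$ is complete. Moreover $C_n \sqcont Q$, since there is a homomorphism $Q\to C_n$ but none back (the body of $Q$ has no $E$-atom). Next, by Proposition~\ref{prop-containment}, $C_n \sqcont C_{2n}$: the folding $X_i \mapsto X_{i \bmod n}$ is a homomorphism $C_{2n}\to C_n$, giving $C_n \qcont C_{2n}$, whereas no homomorphism $C_n \to C_{2n}$ exists because a directed cycle maps to another only when the target length divides the source length, and $2n\nmid n$. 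This already yields an infinite strictly ascending chain $C_1 \sqcont C_2 \sqcont C_4 \sqcont \cdots$ of complete specializations, so none of the $C_n$ is maximal.

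The remaining, and main, work is to upgrade this to \emph{every} complete specialization, ruling out ``rogue'' maximal CSs. The key lemma I would prove is: the $P$/$E$ subgraph of any complete specialization $Q'$ contains a directed cycle. Since $Q'$ is Boolean and complete, Theorem~\ref{theo-plain:reasoning} gives a homomorphism $h$ from the body of $Q'$ into $T_\C(D_{Q'})$; as all $E$-atoms are retained, $h$ restricts to an endomorphism of the digraph $H$ whose vertices are the $P$-terms and whose arcs are the $E$-edges between them, and the image of $h$ avoids the sinks of $H$ (a retained $P$-term is exactly a non-sink, i.e.\ one with an outgoing arc to another $P$-term). A finite DAG admits no endomorphism avoiding all sinks: taking a vertex of maximum rank, its longest outgoing path maps to a path of the same length, which must end at a sink, so some sink is sent to a sink, a contradiction; hence $H$ is not acyclic. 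Letting $L$ be the length of a cycle in $H$, the inclusion of that cycle is a homomorphism $C_L \to Q'$, so $Q' \qcont C_L \sqcont C_{2L}$, and $C_{2L}$ is a complete specialization strictly more general than $Q'$. Since complete specializations exist (e.g.\ $C_1$) yet each admits a strictly more general one, the theorem follows. The hard part is precisely this lemma — showing that the self-referential TC statement forces every complete specialization to ``contain a cycle'' that can then be doubled — since it must handle arbitrary, possibly disconnected or heavily decorated, candidate bodies rather than just the clean family $C_n$.
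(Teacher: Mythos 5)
Your proof is correct and follows essentially the same route as the paper: a self-referential completeness statement forces every complete specialization to contain a directed cycle, and doubling that cycle's length yields a strictly more general complete specialization, via the folding homomorphism $X_i \mapsto X_{i \bmod k}$ together with the fact that a directed $k$-cycle admits no homomorphism into a $2k$-cycle. The paper instantiates this with the single relation $\conn$ and the statement $\Compl(\conn(X,Y);\conn(Y,Z))$ and merely asserts the cycle-existence step, whereas your Boolean two-relation variant and the DAG-endomorphism rank argument actually prove it --- a welcome tightening of the one step the paper leaves implicit, but not a different approach.
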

\begin{proof}

Consider a schema with a binary relation $\conn$ that specifies whether two cities are connected via a direct flight,
and let $C$ be the TCS $\Compl(\conn(X,Y); \conn(Y,Z))$,
which states that the data is complete for all (direct) connections that can be extended.
The query $\query{Q(X)}{\conn(X,Y)}$, which retrieves all cities with an outgoing flight, is not complete wrt $\{C\}$.
As an illustration,
consider the incomplete database 
$\IDB = (\di, \da)$ where $\di = \{ \conn(a,b), \conn(b,c), \conn(d,e) \}$ and $\da = \{ \conn(a,b), \conn(b,c) \}$.
The database $\IDB$ satisfies $C$, but $Q(D^i) \not \subseteq Q(D^a)$.

Now let $Q'$ be any complete specialization of $Q$ wrt $\C$.
Because $Q'$ is complete, its body must contain a set $\A_k$ of atoms that identifies a round trip of length $k$ for some $k \ge 1$,
having the form $\A_k = \set{\conn(X_0,X_1),\conn(X_1,X_2),\ldots,\conn(X_{k-1},X_0)}$,
and $Q'$ must project exactly one of these variables.

Let $k'$ be the largest $k$ for which this holds, and consider the query $Q_{2k'}(X_0) \la \A_{2k'}$.
This query is also a complete specialization of $Q$.
Moreover, the function $\delta_k$ that maps each $X_i$ to $X_{i \bmod k}$ is a homomorphism from $Q_{2k'}$ to $Q'$,
which implies $Q' \sqsubseteq Q_{2k'}$.
However, $\A_k$ cannot be mapped to $\A_{2k'}$ (i.e., the body of $Q$),
therefore there is no homomorphism from $Q'$ to $Q_{2k'}$,
which implies $Q_{2k'} \not\sqsubseteq Q'$.
So $Q'$ is not an MCS. 
\end{proof}

%
These observations lead us to investigate restrictions on a set $\C$ of TCSs
that guarantee the set of CSs (thus also of MCSs) of a query wrt $\C$ to be finite.
We define the \emph{dependency graph} of $\C$ as the graph whose nodes are the relation names appearing in $\C$,
and with an edge from $R$ to $R'$
iff $R'$ appears in $G$ for some statement of the form $\Compl(A; G)$, where the atom $A$ is over $R$.
We say that $\C$ is \emph{acyclic} if its dependency graph is acyclic.
The following result provides 
an upper bound on the size of an MCS for such a $\C$.

\begin{theoremx}\label{th:bound_mcs_atoms}
\label{th:max-size-mcs}
Let $Q$ be a CQ, $\C$ an acyclic set of TCSs and $s$ the number of relation names 
that appear in $\C$.
Then the number of atoms in a MCS of $Q$ wrt $\C$ 
is in $\O(|Q|\times|C|^{s+1})$.%
\footnote{This acyclicity requirement on $\C$ can be relaxed,
while guaranteeing  a bound on the size of an MCS. 
In particular,
it is sufficient to require that $\C$ is \emph{weakly acyclic}, 
as defined in~\cite{fagin_data_exchange}.}
\end{theoremx}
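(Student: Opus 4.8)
The plan is to bound the number of atoms in a complete specialization $Q'$ of $Q$ by analyzing how completeness of $Q'$ is witnessed via the characterization in Theorem~\ref{theo-plain:reasoning}. Recall that $\C \models \compl{Q'}$ iff $\theta\tpl u' \in Q'(T_\C(\cdb{Q'}))$, i.e., every atom of $Q'$ that is needed to produce the frozen answer must lie in $T_\C(\cdb{Q'})$. An atom $A$ enters $T_\C(\cdb{Q'})$ only because some TCS $\Compl(R(\tpl s); G)$ with head relation matching $A$ is applicable, which in turn requires a homomorphic copy of $G$ inside the body of $Q'$. So the presence of each \enquote{useful} atom in $Q'$ is justified by a chain of TCS applications, and the atoms of $G$ that justify $A$ are themselves atoms of $Q'$ that must be justified. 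The key structural idea is that this justification relation induces a forest (or DAG) of atoms rooted at the atom(s) carrying the projected head variable, and that the \emph{depth} of this structure is bounded because $\C$ is acyclic while the \emph{branching} is bounded by the size of the largest TCS condition.

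First I would make precise which atoms of $Q'$ can be assumed present. Since we are looking for an MCS, we may assume $Q'$ is minimal (no redundant atoms), so every atom of $Q'$ is needed to witness completeness via the $T_\C$-chain. I would then define, for each needed atom, a \emph{justification tree}: its root is the atom, its children are the atoms of $Q'$ that realize the condition $G$ of the TCS used to place it in $T_\C(\cdb{Q'})$, and so on recursively. The depth of any such tree is at most $s$ (the number of relation names in $\C$), because each step down the tree moves along an edge of the dependency graph of $\C$ — from the head relation $R$ of a TCS to a relation $R'$ occurring in its condition $G$ — and acyclicity forbids repeating a relation name along a root-to-leaf path, so no path can be longer than $s$. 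The branching factor at each node is at most the number of atoms in the condition $G$ of the applied TCS, which is bounded by $|\C|$ (the size of the statement set). Hence each justification tree has at most $O(|\C|^{s+1})$ nodes.

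Next I would count the roots. The head terms $\tpl u$ of $Q$ (equivalently of $Q'$, since $Q' \sqsubseteq Q$) determine which variables must be produced as the frozen answer, and the atoms of $Q'$ that directly carry these head variables form the roots of the justification trees; their number is bounded by $|Q|$ (at most one root per head-variable occurrence, and the number of head variables is at most $|Q|$). Since every atom of a minimal $Q'$ must appear in at least one justification tree rooted at such an atom — otherwise it would be redundant and could be dropped, contradicting minimality — the total number of atoms of $Q'$ is at most (number of roots) $\times$ (size of one tree) $= |Q| \times O(|\C|^{s+1})$, giving the stated bound $O(|Q| \times |\C|^{s+1})$.

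The main obstacle I anticipate is justifying that \emph{every} atom of $Q'$ lies in some justification tree, i.e., that an MCS has no \enquote{extra} atoms beyond those forced by the completeness witnesses. An atom could in principle appear in $Q'$ not to contribute its own relation to the answer, but only to serve as part of the condition $G$ enabling the completeness of another atom; the justification-tree framing is designed to capture exactly these atoms as non-root tree nodes, so the real work is to argue that an atom serving no purpose in any tree (neither carrying a head variable nor appearing in any applied condition) can be deleted while preserving both containment $Q' \sqsubseteq Q$ and completeness, contradicting maximality/minimality. Care is also needed because the same condition atom might be shared across several TCS applications, so the trees overlap; I would handle this by bounding the \emph{union} of tree node-sets rather than treating the trees as disjoint, which only tightens the count. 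The footnote's extension to weakly acyclic $\C$ would replace the depth bound $s$ by the longer (but still finite) bound coming from the rank of the dependency graph in the sense of \cite{fagin_data_exchange}, leaving the branching argument intact.
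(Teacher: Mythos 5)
Your overall strategy is the same as the paper's: exploit acyclicity of $\C$ to stratify the relation names, observe that each justification step (matching an atom against the head of a TCS and supplying its condition $G$) moves strictly down the dependency order, bound the depth of the resulting cascade by $s$ and its branching by $|\C|$, and multiply by at most $|Q|$ starting points to get $|Q|\times(|\C|+\dots+|\C|^{s})\in\O(|Q|\times|\C|^{s+1})$. The paper phrases this as \quotes{atoms one needs to add to $Q$} rather than as justification trees, but the counting is identical.

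There is one concrete slip in your version: you take the roots of the justification forest to be the atoms of $Q'$ that \emph{carry head variables}, and you argue that any atom outside such a tree \quotes{would be redundant and could be dropped.} That deletion argument fails. The roots must instead be all atoms of $\delta B$, where $\delta$ is the homomorphism witnessing $Q'\sqsubseteq Q$ and $B$ is the body of $Q$: an atom of $\delta B$ that carries no head variable and serves in no condition still cannot be deleted, because removing it makes the query strictly more general and may break the containment $Q'\sqsubseteq Q$ (e.g.\ for $Q(X)\la R(X),S(Y)$ and $Q'(X)\la R(X),S(b)$, dropping $S(b)$ yields a query no longer contained in $Q$). This is exactly why the paper anchors the cascade at \emph{every} atom of $Q$. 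The fix is harmless for the bound, since $|\delta B|\le|Q|$, and your remaining steps (depth $\le s$ by non-repetition of relation names along a dependency path, branching $\le|\C|$, counting the union of overlapping trees) go through unchanged.
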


In the rest of this section,
we show how to compute the CSs of $Q$ that are maximal (wrt $\sqsubseteq$) within the space of queries with at most $|Q| + k$ atoms, 
for some $k \in \nn$.
We call such a query a $k$-MCS.
Theorem \ref{th:bound_mcs_atoms} implies that, if $\C$ is acyclic, then MCSs and $k$-MCSs coincide for a large enough $k$.
We also restrict our investigations to the case where $Q$ is a minimal query in the sense of Lemma \ref{lemma-completeness:of:minimal:queries},
that is $Q$ has no redundant atoms.

We first analyze how to compute MCSs without adding atoms to the body of $Q$ (we call such MCSs maximal complete instantiations, defined below),
and we then extend this approach to find $k$-MCSs.

\subsection{Maximal Complete Instantiations}


\begin{definitionx}[Maximal Complete Instantiation]
If $\C$ is a set of TCSs, then a \emph{maximal complete instantiation (MCI)} of $Q$ wrt $\C$ is an instantiation of $Q$ that is complete wrt $\C$
and maximal wrt containent among these.
\end{definitionx}
As an illustration, 
in the proof sketch of Theorem~\ref{th:no_mcs},
the query $Q'(X) \la \conn(X,X)$ is the only MCI for $Q$ wrt $\C$.

Our approach to compute MCIs revolves around a specific type of substitution that we call a \emph{complete unifier},
defined as follows:

\begin{definitionx}[Complete Unifier]\label{def:complete_unifier}
Let $Q$ be a CQ and $\C$ a set of TCSs.
A \emph{complete unifier} for $Q$ and $\C$ is a substitution $\gamma$ such that, 
for each atom $A$ in the body of $Q$,
there is a TCS $C  =\Compl(A';G)\in \C$ that satisfies
  \[\gamma A = \gamma A' \qquad\te{and}\qquad \gamma G \subseteq \gamma B.\]
\end{definitionx}

Applying a complete unifier to $Q$ yields a complete query:
\begin{propositionx}\label{prop:comp_unifier}
If $\gamma$ is a complete unifier for $Q$ and a set $\C$ of TCSs,
then $\C \models \Compl(\gamma Q)$.
\end{propositionx}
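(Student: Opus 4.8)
The plan is to derive the claim from the easy direction of Claim~\ref{Claim-All:Atoms:Mapped} of Lemma~\ref{lemma-completeness:of:minimal:queries}. Recall that the implication ``$D_{Q''} \subseteq T_\C(D_{Q''})$ implies $\C \models \Compl(Q'')$'' was established there for \emph{every} conjunctive query $Q''$ (only the converse direction used minimality). Since instantiation preserves safety, $\gamma Q$ is again a genuine CQ, so it suffices to prove the single containment $D_{\gamma Q} \subseteq T_\C(D_{\gamma Q})$, from which completeness of $\gamma Q$ follows immediately.

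To this end, let $\theta$ be the freezing substitution of $\gamma Q$, so that $D_{\gamma Q} = \theta\gamma B$, where $B$ is the body of $Q$. I would fix an arbitrary atom $A \in B$ and show that its frozen image $\theta\gamma A$ belongs to $T_\C(D_{\gamma Q})$. By the definition of a complete unifier, there is a TCS $C = \Compl(A'; G) \in \C$, say with $A' = R(\bar s)$, such that $\gamma A = \gamma A'$ and $\gamma G \subseteq \gamma B$. The key step is to use the composed assignment $\beta = \theta\gamma$, restricted to the variables of $C$, as a witness that $C$ is applicable in $D_{\gamma Q}$. This $\beta$ is well defined as an assignment of domain values, because by the two unifier conditions every variable of $A'$ and of $G$ is mapped by $\gamma$ into a term occurring in $\gamma B$, which $\theta$ then freezes to a constant.

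Concretely, recall that $T_C(D) = \{ R(\bar t) \mid \bar t \in Q_C(D) \}$ with $Q_C(\bar s) \la R(\bar s), G$. Under $\beta$ the head atom of $Q_C$ becomes $\beta A' = \theta\gamma A' = \theta\gamma A$, which lies in $\theta\gamma B = D_{\gamma Q}$ since $A \in B$ and $\gamma A = \gamma A'$; and the condition becomes $\beta G = \theta\gamma G \subseteq \theta\gamma B = D_{\gamma Q}$, using $\gamma G \subseteq \gamma B$ and the fact that applying $\theta$ preserves inclusions. Hence $\beta\bar s \in Q_C(D_{\gamma Q})$, and therefore $\theta\gamma A = R(\beta\bar s) \in T_C(D_{\gamma Q}) \subseteq T_\C(D_{\gamma Q})$. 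As $A$ was arbitrary, this yields $D_{\gamma Q} \subseteq T_\C(D_{\gamma Q})$, which closes the argument.

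I expect no genuine conceptual obstacle: the proposition is essentially a strengthening of Claim~\ref{Claim-Instantiations:Are:Complete} of Lemma~\ref{lemma-completeness:of:minimal:queries} (where the witnessing statements and conditions had to be \emph{extracted} from the completeness of a minimal query, whereas here they are handed to us directly by the complete unifier). The only delicate point is the substitution bookkeeping around freezing --- verifying that $\beta = \theta\gamma$ is well defined on the variables of $C$ and that pushing $\theta$ through preserves both $\gamma A = \gamma A'$ and $\gamma G \subseteq \gamma B$. This is the step I would write out most carefully, although it is routine.
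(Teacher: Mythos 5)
Your proof is correct and follows essentially the same route as the paper's: both verify $D_{\gamma Q}\subseteq T_\C(D_{\gamma Q})$ by using the composed substitution $\theta\gamma$ as the witness that the TCS supplied by the complete-unifier condition is applicable to each frozen atom, and then invoke the easy direction of the completeness characterization. If anything, your version is slightly cleaner, since you observe that this direction holds for all CQs and thereby avoid the paper's (unnecessary) ``wlog $\gamma Q$ is minimal'' step.
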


\begin{example-new}
In Ex.~\ref{ex:intro}, the substitution 
$\gamma =\set{L \mapsto \english}$ is a complete unifier 
for $Q_\indexpbl$ and the set of TCSs, 
and $\gamma Q_\indexpbl = Q_\indexpbl^{\textit{spec}}$ is indeed complete.
\end{example-new}

%
%

\begin{theoremx}\label{th:cu_dom}
Let $\C$ be a set of TCSs, let $Q$ be a query, and $Q'$ be a complete instantiation of $Q$ wrt $\C$.
 Then there is a complete unifier $\gamma$ for $Q$ and 
 $\C$
 such that $Q' \sqsubseteq \gamma Q$.
\end{theoremx}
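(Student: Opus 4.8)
The plan is to replace $Q'$ by its minimal core and then reassemble, atom by atom, the table-completeness witnesses guaranteed by minimality into a single substitution $\gamma$. Since $Q'$ is an instantiation of $Q$, write $Q' = \alpha Q$ for some substitution $\alpha$. Because $Q'$ need not be minimal (indeed, Claim~\ref{Claim-Instantiations:Are:Complete} of Lemma~\ref{lemma-completeness:of:minimal:queries} can fail for non-minimal queries), I would first pass to a minimal query $Q_m$ equivalent to $Q'$, obtained as the core of $Q'$ via a folding endomorphism $\rho$ of $Q'$, so that $Q_m = \rho Q'$. Setting $\sigma := \rho\alpha$, the query $Q_m = \sigma Q$ is itself an instantiation of $Q$, and $Q_m \equiv Q'$ is complete wrt $\C$.

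Next I would exploit minimality. Freeze $Q_m$ with the freezing substitution $\theta$, giving the canonical database $\cdb{Q_m} = \theta\sigma B$. Since $Q_m$ is minimal and complete, Claim~\ref{Claim-All:Atoms:Mapped} of Lemma~\ref{lemma-completeness:of:minimal:queries} yields $\cdb{Q_m} \subseteq T_\C(\cdb{Q_m})$. Hence for every body atom $A_i$ of $Q$, the frozen atom $\theta\sigma A_i \in \cdb{Q_m}$ lies in $T_\C(\cdb{Q_m})$, so some TCS $C_i = \Compl(A_i';G_i) \in \C$ (taken with variables renamed apart) is applicable to it: there is an assignment $\beta_i$ with $\beta_i A_i' = \theta\sigma A_i$ and $\beta_i G_i \subseteq \cdb{Q_m}$.

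The key step is to merge these per-atom witnesses into one unifier. I would define $\nu$ to equal $\theta\sigma$ on the variables of $Q$ and $\beta_i$ on the (pairwise disjoint) variables of each $C_i$, and then unfreeze by setting $\gamma := \theta^{-1}\nu$. Then $\gamma$ is a complete unifier in the sense of Definition~\ref{def:complete_unifier}: the equality $\gamma A_i = \gamma A_i'$ follows from $\nu A_i = \theta\sigma A_i = \beta_i A_i' = \nu A_i'$, and the inclusion $\gamma G_i \subseteq \gamma B$ follows from $\nu G_i = \beta_i G_i \subseteq \cdb{Q_m} = \nu B$ by applying $\theta^{-1}$ to both sides. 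Finally, on the variables of $Q$ we have $\gamma = \theta^{-1}\theta\sigma = \sigma$, so $\gamma Q = \sigma Q = Q_m \equiv Q'$, which gives $Q' \qcont \gamma Q$ as required.

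I expect the main obstacle to be precisely the non-minimality of $Q'$: without first descending to the core $Q_m$ one cannot invoke the clean characterization $\cdb{Q_m} \subseteq T_\C(\cdb{Q_m})$, and the atom-by-atom extraction of witnesses breaks down. The remaining delicate point is the freezing/unfreezing bookkeeping that combines the individual $\beta_i$ into the single substitution $\gamma$ while preserving $\gamma G_i \subseteq \gamma B$; this mirrors the computation in the proof of Claim~\ref{Claim-Instantiations:Are:Complete} of Lemma~\ref{lemma-completeness:of:minimal:queries} and should be routine once the witnesses are aligned.
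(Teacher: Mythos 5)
Your proof is correct and follows essentially the same route as the paper's: reduce to a minimal (core) instantiation, use the completeness characterization for minimal queries to extract a per-atom TCS witness $\beta_i$, and assemble these witnesses together with the instantiating substitution into a single complete unifier. The only differences are cosmetic: you justify the paper's ``wlog minimal'' step explicitly via a folding endomorphism, and you stop at the combined substitution (giving $\gamma Q \equiv Q'$), whereas the paper additionally passes to the most general unifier of the induced pairs, which is not needed for the statement as given.
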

\begin{example-new}[continued]
Consider
$Q'(N) \la \ \pupil(N,1,S), \school(S,\primary,\merano), \learns(N,\english)$. 
Then $Q'$ is a complete instantiation of  $Q_\indexpbl$
and $Q' \sqsubseteq \gamma Q_\indexpbl$.
\end{example-new}

Observe that if $\gamma$ is a complete unifier for $Q$ and $\C$, then there must be a subset $\C'$ of $\C$ such that
$\gamma$ is a complete unifier for $Q$  and 
every atom $A$ in $Q$ is unified with the head of exactly one TCS from $\C'$.
We call such a $\C'$ a \emph{matching subset} of $\C$ for $Q$.
Since $Q$ and $\C'$ admit a complete unifier,  they admit a (unique) most general one.
Let $\mgu(Q,\C')$ denote this unifier,
and let
$\mgu(Q, 2^{\C}) = \{\mgu(Q,\C') \mid \C' \te{is a matching subset of} \C \te{for} Q, Q \te{and} \C' \text{ admit a complete unifier} \}$.
Since $\C$ is finite, $\mgu(Q, 2^{\C})$ is finite, too.
More, as an immediate consequence of Theorem~\ref{th:cu_dom} and Proposition~\ref{prop:comp_unifier},
each MCI of $Q$ wrt $\C$ is equivalent to $\gamma Q$ for some $\gamma \in \mgu(Q, 2^{\C})$.
This is the rationale behind our procedure to compute all MCIs of $Q$ wrt $\C$ described with Algorithm \ref{alg:spec}.
The first loop computes $\{\gamma Q \mid \gamma \in \mgu(Q, 2^{\C})\}$, 
relying on a function $\texttt{MGU}$ that returns $\mgu(Q, \C')$ if it exists,
and \texttt{null} otherwise.
The second loop discards non-maximal instantiations within these.

\begin{algorithm}[!ht]
 \caption{Computes Maximal Complete Instantiations}
 \label{alg:spec}
 \SetAlgoNoEnd
 \LinesNumbered
 \DontPrintSemicolon
 \SetKwInOut{Input}{Input}\SetKwInOut{Output}{Output}
 \Input{a query $Q(\tpl u)\la A_1,\dots,A_n$, a set  $\C$ of TCSs}
 \Output{the set $\S$ of all MCIs of $Q$ wrt $\C$}
$\S := \emptyset$,  $\gamma := \mathtt{null}$\\
\ForEach{matching subset $\C'$ of $\C$ for $Q$}{
 {
    $\gamma := \mathtt{MGU}(Q,\C')$ \\
    \If{ $\gamma \not= \mathtt{null}$ }{
      $\S := \S \cup \set{\gamma Q}$
    }
  }

}

\ForEach{$Q'$ in $\S$}{ 
  \lIf{there exists $Q''\in \S$ such that $Q' \sqsubseteq \set Q''$}
    {$\S := \S \setminus \set{Q'}$}
}
\KwRet{$\S$}
\end{algorithm}

\paragraph*{Computational Complexity of the MCI Decision Problem}
We study the problem that consists in deciding whether a query is an MCI. 
\begin{theoremx}
Given a query $Q$, a candidate query $Q'$ and a set~$\C$ of TCSs, 
deciding whether $Q'$ is an MCI of $Q$ \wrt $\C$ is in $\PIPTWO$.
\end{theoremx}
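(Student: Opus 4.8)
The plan is to establish membership in $\PIPTWO = \coNP^{\NP}$ by exhibiting a co-nondeterministic polynomial-time procedure with an $\NP$ oracle that accepts exactly when $Q'$ is an MCI. I would first unfold the definition of MCI into three conditions: \ei $Q'$ is an instantiation of $Q$; \eii $Q'$ is complete wrt $\C$; and \eiii $Q'$ is maximal wrt $\qcont$ among the complete instantiations of $Q$. Conditions \ei and \eii are each in $\NP$, hence each decidable by one oracle call. For \ei one guesses a substitution $\alpha$ whose domain is $\mathrm{var}(Q)$ and whose range lies among the terms of $Q'$ (a certificate of polynomial size) and verifies $\alpha Q \equiv Q'$, which is a pair of CQ-containment checks and so in $\NP$ by Proposition~\ref{prop-containment}. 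For \eii, Theorem~\ref{theo-plain:reasoning} reduces completeness to the positive condition $\theta\tpl u' \in Q'(T_\C(D_{Q'}))$; a polynomial certificate consists of the retrieval homomorphism for $Q'$ together with, for each body atom, a TCS and an assignment witnessing that the frozen atom already lies in $T_\C(D_{Q'})$. Hence \eii is in $\NP$ as well.

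The crux is condition \eiii, and the main obstacle is that maximality quantifies over all competing complete instantiations $Q''$ with $Q' \sqcont Q''$, of which there are a priori infinitely many. My plan is to cut this down to a bounded family of polynomial-size candidates using Theorem~\ref{th:cu_dom} and the observation preceding Algorithm~\ref{alg:spec}: since instantiations are finite up to equivalence, a non-maximal complete $Q'$ is strictly contained in some MCI, and every MCI is equivalent to $\gamma Q$ for some $\gamma \in \mgu(Q,2^\C)$. Consequently, $Q'$ fails to be maximal iff there is a matching subset $\C'$ of $\C$ for $Q$ whose most general unifier $\gamma = \mgu(Q,\C')$ exists and satisfies \emph{both} $\C \models \Compl(\gamma Q)$ \emph{and} $Q' \sqcont \gamma Q$. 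I would check this equivalence via the short containment chain $Q' \qcont Q'' \qcont \gamma Q$, verifying along the way that $\gamma Q \not\equiv Q'$ so that the strict containment is preserved. Each matching subset is an assignment of the atoms of $Q$ to statements of $\C$, hence a polynomial-size object (there are at most $|\C|^{|Q|}$ of them), and $\gamma Q$ is computable from it in polynomial time.

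It then remains to read off the quantifier structure. Maximality is the universal statement ``for every matching subset $\C'$, \emph{not}\,$(\C\models\Compl(\gamma Q)$ and $Q'\sqcont\gamma Q)$''. For fixed $\C'$ the inner predicate is the conjunction of $\C\models\Compl(\gamma Q)$ [in $\NP$], $Q'\qcont\gamma Q$ [in $\NP$ by Proposition~\ref{prop-containment}], and $\gamma Q\not\qcont Q'$ [in $\coNP$], so its negation is decidable with a constant number of $\NP$-oracle calls. Thus a $\coNP$ machine performs the two oracle calls for \ei and \eii, universally branches over the polynomially represented matching subsets $\C'$, computes $\gamma = \mgu(Q,\C')$, and rejects a branch exactly when the oracle certifies a strictly larger complete instantiation; it accepts overall iff $Q'$ is a complete instantiation and no such witness exists, i.e.\ iff $Q'$ is an MCI. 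This is a $\coNP$ computation relative to an $\NP$ oracle, placing the problem in $\coNP^{\NP}=\PIPTWO$. The delicate step I expect to require the most care is precisely the reduction of the unbounded maximality quantifier to the finite family $\{\gamma Q\}$ via Theorem~\ref{th:cu_dom}, together with the bookkeeping of containment directions that keeps the ``strictly contained'' test within a single alternation and hence the whole decision at the second level of the hierarchy.
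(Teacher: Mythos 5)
Your proposal is correct, and it agrees with the paper on the easy two-thirds of the argument: both proofs decompose the MCI property into (a) being an instantiation of $Q$, (b) being complete wrt $\C$, and (c) being maximal, and both dispatch (a) and (b) with single calls to an $\NP$ oracle. Where you genuinely diverge is in the treatment of maximality, which is indeed the crux. The paper works with the complement (\quotes{$Q'$ is \emph{not} an MCI}) and places it in $\Sigma^{\rm P}_2$ by nondeterministically guessing an arbitrary substitution $\beta$ witnessing a strictly more general complete instantiation, checking $\beta Q$ complete, $Q'\sqsubseteq\beta Q$, $\beta Q\not\sqsubseteq Q'$ with oracle calls; it does not invoke Theorem~\ref{th:cu_dom} at all. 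You instead stay on the positive side and universally quantify over the finite, structured family $\set{\gamma Q \mid \gamma\in\mgu(Q,2^{\C})}$, using Theorem~\ref{th:cu_dom} and Proposition~\ref{prop:comp_unifier} to argue that non-maximality is always witnessed inside this family. The paper's route is lighter (no appeal to complete unifiers) but quietly assumes that a witnessing $\beta$ admits a polynomial-size encoding, which really needs the observation that its range can be restricted to terms occurring in $Q$, $Q'$ and $\C$; your route makes the certificate bound transparent, since the candidates are generated from $\C$ itself, at the cost of importing Theorem~\ref{th:cu_dom}. Both yield $\coNP^{\NP}=\PIPTWO$, and your bookkeeping of the single alternation (the only $\coNP$ ingredient inside a branch being $\gamma Q\not\sqsubseteq Q'$) is sound.

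One small caveat: you assert that $\gamma=\mgu(Q,\C')$ is computable in polynomial time from the matching subset $\C'$ alone. Because the complete-unifier condition also demands $\gamma G\subseteq\gamma B$, the unification problem is not fixed by $\C'$ until you also choose, for each condition atom of each selected TCS, which body atom it is identified with; different choices give different (and possibly incomparable) maximal unifiers. This is a wrinkle inherited from the paper's own discussion around Algorithm~\ref{alg:spec}, and it does not affect your complexity bound: simply let the universal branch range over pairs consisting of a matching subset together with such a condition-atom assignment, which is still a polynomial-size object, and compute the ordinary syntactic mgu for that fixed choice.
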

\begin{proof}[Proof sketch]
We show that a non-deterministic Turing machine can verify in polynomial time that $Q'$ is \emph{not} an MCI, assuming an oracle for $\NP$,
as follows: 
\begin{enumerate}[label=(\Roman*)]
  \item determine whether $Q'$ is complete,~\label{item:mci1}
  \item if it is, then determine whether it is also an instantiation of $Q$,\label{item:mci2}
  \item if it is, then verify that there is a more general complete instantiation of $Q$.~\label{item:mci3}
  \end{enumerate} 

For Step~\ref{item:mci1},
it was show in~\cite{Razniewski:Nutt-Compl:of:Queries-VLDB11} that deciding whether a CQ is complete wrt to a set of TCSs is $\NP$-complete.
Therefore one can determine whether $Q'$ is complete with one call to an $\NP$ oracle.

For Step~\ref{item:mci2}, 
similarly, we show that deciding whether $Q'$ is an instantiation of $Q$ is in $\NP$.
Observe that $Q'$ is an instantiation of $Q$ iff there is a substitution $\alpha$ such that $Q' = \alpha Q$.
If such a substitution exists,
then it is a function from the variables that appear in $Q$ to the variables and constants that appear in $Q'$.
So $\alpha$ can be encoded as a certificate with length polynomial in the size of $Q$ and $Q'$.
And given $\alpha$,
one can (trivially) verify in polynomial time that $Q' = \alpha Q$.

For Step~\ref{item:mci3}, it is sufficient to verify that there exists a substitution $\beta$ such that \ei $\beta Q$ is complete, 
\eii $Q' \sqsubseteq \beta Q$ and
\eiii $\beta Q \not \sqsubseteq Q'$.
Again, such a substitution can be encoded as a certificate with polynomial length.
Next, as we explained above for Step~\ref{item:mci1}, whether \ei holds can be determined (therefore also verified) with one call to an $\NP$ oracle.
And it is well-known that containment of CQs is in $\NP$,
therefore \eii and \eiii can also be verified by calling such an oracle (note that a quantifier alternation is only needed here for \eiii, which is in $\coNP$,
whereas $\ei$ and $\eii$ are in $\NP$).
\end{proof}

It is not hard to show 
(using a similar technique as for Proposition~\ref{prop:gen-dp-complete})
that deciding the MCI problem is $\DP$-hard.
Whether it is $\PIPTWO$-complete is an open question.

\subsection{Adding Atoms}
\label{sec:query-spec-adding-atoms}

We can now extend our approach to computing all $k$-MCSs of a query $Q$ of size $n$ wrt a set $\C$ of TCSs.
Let $\Sigma_\C$ be the set of all relation names appearing in $\C$.
An \emph{extension\/} of $Q$ is a query obtained by adding 
atoms to the body of $Q$.
We call an extension \emph{fresh} if the variables of the added atoms do not appear in $Q$.
A first observation is that a fresh extension $Q'$ of $Q$ is a specialization of $Q$,
therefore the MCIs of $Q'$ are CSs of $Q$.

Next, a key observation is that 
the $k$-MCSs of $Q$ coincide with the MCIs of
all fresh extensions of $Q$ with $n+k-1$ fresh atoms,
restricted to MCIs of size $\le n + k$.
The intuition is the following.
Let $Q'$ be a CS of $Q$ of size $\le n+k$.
Then there is a homomorphism $\delta$ from $Q$ to $Q'$.
Let also $B'$ be the set atoms in $Q'$ that have no preimage via $\delta$.
Then the size of $B'$ is $\le n+k-1$.
%
We construct a new query $Q''$ of size $2n+k-1$
whose body consists of the body $B$ of $Q$,
and a fresh version of $B'$.
Then $Q'\qcont Q''$. 
Moreover, since $Q'$ is complete, $Q''$ must have an MCI at least as general as $Q'$.

The first loop of Algorithm~\ref{alg:kspec}  implements this idea:
the function $\texttt{MCI}_{\le n+k}(Q'',\C)$ returns the MCIs of $Q''$ wrt $\C'$ of size $\le n+k$.
Then the second loop discards non-maximal instantiations within these.

%
%
%
%
%
%
%
%

\begin{algorithm}
 \label{alg:kspec}
 \SetAlgoNoEnd
 \LinesNumbered
 \DontPrintSemicolon
 \SetKwInOut{Input}{Input}\SetKwInOut{Output}{Output}
 \Input{a query $Q(\tpl u)\la B$, a set $\C$ of TCSs , $k \in \nn_0$}
 \Output{the set $\S$ of all $k$-MCSs of $Q$ wrt $\C$}
$\S := \emptyset$ 

%
{\ForEach{set $B'$ of fresh atoms of size $n+k-1$ over
  $\Sigma_\C$ }{
  \textit{construct} $Q''(\tpl u) \la B,B'$\\
  $\S := \S \cup \mathtt{MCI}_{\le n + k}(Q'',\C)$
}

}
\ForEach{$Q''$ in $\S$}{
  \lIf{exists $Q'''\in \S$ such that $Q'' \sqsubseteq Q'''$}
  { $\S := \S \setminus \{Q''\}$}
}
\KwRet{$\S$}
 \caption{Computes k-MCSs}
\end{algorithm}


\section{Implementation}
\label{sec:implementation}

We now briefly discuss how we chose platforms to implement our algorithms
in the demo tool MAGIK~\cite{MAGIK-demo-VLDB2013}.

\paragraph{Implementing Generalization}

At its core the generalization algorithm repeatedly applies the $G_\C$ operator to the
database instance $D_Q$ until it has reached a fixed point.
In each round, starting from an instance of the size of the query,
a new instance is produced by applying TC statements in a forward fashion,
until original and new instance are identical.
We implemented this via a datalog engine, namely the ASP solver \texttt{dlv}.
The instance $D_Q$ is represented by facts, initially obtained by freezing the query,
and the TC statements are translated into TC rules.

For example,
the query $Q_\indexppb(N)$ would be translated into the facts $\pupil^i(n',c',s'),$ and $\school^i(s',\primary,\merano)$,
while the statement $C_\indexpb$ would be translated into the rule
$\pupil^a(N,C,S) \leftarrow \pupil^i(N,C,S),\,\school^i(S,T,\merano))$.
To distinguish original and new atoms, the relation symbols are either labeled with the superscript ${}^i$,
standing for ``ideal'', or ${}^a$, standing for ``available''.
A fixed point is reached if each ``ideal'' fact is translated into an ``available'' fact.

While the ASP functionality is not used here, it becomes beneficial when taking account of disjunctive constraints
such as finite domain constraints.

\newcommand{\ignore}[1]{}

\ignore{
The generalization algorithm has been implemented in the demo tool MAGIK~\cite{MAGIK-demo-VLDB2013}.
The implementation (in Java) calls the $G_\C$ operator until its least fixpoint is reached. 
The operator $G_\C$ is implemented via the ASP solver \texttt{dlv}.
We observe that one can further improve the implementation of the algorithm by encoding it completely in Datalog and using recursion. 
Basically, one would need to extend every relation with an extra argument 
$k$ that represents the ``version'' of the atoms obtained after applying ${G_\C}^k$.
Then, the recursion can stop as soon as we encounter the first $k$ for which the generalization query is complete.
}

\paragraph{Implementing Specialization}

The core operation of the specialization algorithm in Section~\ref{sec:query:specialization} is unification,
which makes ASP systems unsuitable as a platform while it is offered as a functionality by Prolog.
We implemented it in SWI-Prolog.
The problem is inherently hard, due to the doubly exponential search space.

For instance, let us consider the query
$Q_\indexl \la \learns(N,L)$, 
and the set of TCSs from our running example,
minus the TCS $C_\indexpb$,
and extended with
$\Compl(\pupil(N,C,S);\class(C,S,L,\halfDay))$ 
\text{~and~} 
$\Compl(\pupil(N,C,S);\class(C,S, L,\fullDay)).$
The search for all $3$-MCSs would run out of memory.%
\footnote{Using an \texttt{Intel Core i7} with 8GB of RAM, and 2GB allocated to SWI-Prolog. }

To avoid this we implemented several optimizations.%
\footnote{The code is available at: \url{https://github.com/osavkovic/QuerySpecProlog/}}
Briefly, to compute $k$-MCSs,
we consider extensions of size 0, then $1,2,\ldots,k$.
For $j>i$,
the $i$-MCSs are likely to be identical to the $j$-MCSs.
By keeping in memory the list of maximal specializations collected so far, 
we can compare them for containment with the currently analyzed extensions.
This step reduces the search space at line 2 of Alg.~\ref{alg:kspec}.
This way, we reduce the number of specializations that we store, and this reduces memory consumption.
Further optimizations can be implemented in the presence of integrity constraints (e.g., foreign keys) which we do not present in this paper.

\ignore{
Specialization can be implemented by the algorithm in Section~\ref{sec:query:specialization}.
This algorithm has also been implemented in MAGIK~\cite{MAGIK-demo-VLDB2013}, this time via SWI-Prolog.
The problem is inherently hard, due to the doubly exponential search space.

For instance, let us consider the query
$Q_\indexl \la \learns(N,L)$, 
and the set of TCSs from our running example,
minus the TCS $C_\indexpb$,
and extended with
$\Compl(\pupil(N,C,S);\class(C,S,L,\halfDay))$ 
\text{~and~} 
$\Compl(\pupil(N,C,S);\class(C,S, L,\fullDay)).$
The search for all $3$-MCSs would run out of memory.%
\footnote{Using an \texttt{Intel Core i7} with 8GB of RAM, and 2GB allocated to SWI-Prolog. }

To avoid this we implemented several optimizations.%
\footnote{The code is available at: \url{https://github.com/osavkovic/QuerySpecProlog/}}
Briefly, to compute $k$-MCSs,
we consider extensions of size 0, then $1,2,\ldots,k$.
For $j>i$,
the $i$-MCSs are likely to be identical to the $j$-MCSs.
By keeping in memory the list of maximal specializations collected so far, 
we can compare them for containment with the currently analyzed extensions.
This step reduces the search space at line 2 of Alg.~\ref{alg:kspec}.
This way, we reduce the number of specializations that we store, and this reduces memory consumption.
Further optimizations can be implemented in the presence of integrity constraints (e.g., foreign keys) which we do not present in this paper.} 

We performed preliminary tests of our optimized code, and reached size $|Q_\indexl|+7$ after around more than 2 hours.
%
The average running times (over multiple runs) are reported in Table~\ref{table:specialization-test}.
We observe that the running time grows exponentially with the number of atom added to the query.

\begin{table}[!ht]
\begin{center}
\begin{tabular}{l@{\qquad}c@{~~~~}c@{~~~~}c@{~~~~}c@{~~~~}c@{~~~~}c@{~~~~}c@{~~~~}c@{}}
\toprule
\textbf{k-MCS}	& 
0	&
1	&
2	&
3	&
4	&
5	&
6	&
7	
\\
\textbf{CPU time (sec)}	& 
0			&
0			&
0		&
0		&
0 		&
8		&
725		&
9083
\\
\bottomrule
\end{tabular}
\end{center}
\caption{Time required for the specialization algorithm to compute 
k-MCS of query $Q_\indexl$. 
}
\vspace*{-2em}
\label{table:specialization-test}
\end{table}%
This small experiment shows 
that the optimizations improve the initially proposed specialization algorithm (especially,  concerning memory),
and that such optimizations may still not be sufficient for large queries.


\section{Conclusion}
\label{sec:conclusion}

In this work, we studied the completeness of conjunctive queries over partially complete databases
where completeness is determined via so-called completeness statements.
For queries that cannot be answered completely, we study ways to approximate such queries with more general
or more special queries that are complete.
In particular, we established characterizations and algorithms for finding maximal complete specializations (MCSs)
or the (unique) minimal complete generalization (MCG).
The MCG can be characterized as the least fixed-point of a monotonic operator in a preorder.
An MCS can be computed through unification between the query and completeness statements.
The complexity of both problems is studied, and implementation techniques using ASP and Prolog engines are discussed.

We plan to extend our theory by considering integrity constraints like primary and foreign keys,
and finite domain constraints, and to enhance our implementation techniques.


\bibliographystyle{plain} 
{\footnotesize
\bibliography{bib/ognjen-bib,bib/ognjen-new,bib/os-bib,bib/ref-stability}
}


\newpage
\appendix

\section*{Appendix}

\startcontents[chapter]
\printcontents[chapter]{l}{0}{\setcounter{tocdepth}{1}}

\section{The Complexity of Identifying MCGs}
Recall that $\DP$ consists of problems
that can be decided by an algorithm that simultaneously performs two calls to 
an NP-oracle. The algorithm responds \quotes{yes} if the first oracle call returns \quotes{yes} and 
the second one returns \quotes{no}.
Our proof is based on a reduction of a $\DP$-complete graph problem,
known as 
\emph{Critical 3-colorability} \cite{papadimitriou-complexitybook}:
\begin{quote}
Given a graph, is it true that the graph is not 3-colorable but 
any subgraph obtained by removing one of the edges is 3-colorable?
\end{quote}

\begin{propositionx}
For a set $\C$ of TC statements and two CQs $Q$ and $Q'$,
deciding whether $Q' = \GenC(Q)$ is $\DP$-complete.
\end{propositionx}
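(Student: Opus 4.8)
The plan is to prove membership in $\DP$ and $\DP$-hardness separately. For membership, I would first make the action of $\GenC$ explicit at the level of individual atoms. Unfolding the definitions of $T_\C$ and $\GenC$, an atom $A = R(\bar t)$ of the body $B$ of $Q$ is \emph{retained} by $\GenC$ (i.e., lies in the body of $\GenC(Q)$) if and only if there is a statement $\Compl(R(\bar s); G) \in \C$ and an assignment $\beta$ with $R(\beta \bar s) = R(\theta \bar t)$ and $\beta G \subseteq \theta B$, where $\theta$ is the freezing substitution. Since such a $\beta$ can be guessed with a certificate of size bounded by $\C$ and then checked in polynomial time, deciding that a given atom is retained is in $\NP$, and deciding that it is \emph{not} retained is in $\coNP$. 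Writing $B'$ for the body of $Q'$, I would then note that $Q' = \GenC(Q)$ holds precisely when \ei $B' \subseteq B$ and every atom of $B'$ is retained, and \eii no atom of $B \setminus B'$ is retained (using that $\GenC(Q)$ is always a subquery of $Q$). Condition \ei is an $\NP$ property (guess one witness $\beta$ per atom of $B'$ simultaneously) and \eii is a $\coNP$ property, so the problem is the intersection of an $\NP$ and a $\coNP$ language, hence in $\DP$.

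For hardness I would reduce from Critical 3-Colorability. Given a graph $H$, encode the colours as constants $1,2,3$ and the triangle $K_3$ as the set $T$ of the six ground atoms $E(i,j)$ with $i\ne j$ and $i,j\in\{1,2,3\}$; a proper $3$-colouring of a graph then corresponds exactly to a homomorphism of its (symmetric) edge atoms into $T$. I take $Q$ and $Q'$ to be Boolean queries (empty head, so safety is immediate). The body $B$ of $Q$ consists of $T$, one distinguished test atom $A_0 = P_0(a)$, and, for each edge $e$ of $H$, a test atom $A_e = P_e(a)$ over a fresh unary relation $P_e$. The set $\C$ contains the unconditional statement $\Compl(E(X,Y); \true)$, the statement $C_0 = \Compl(P_0(Z); G_H)$ whose condition $G_H$ lists the atoms $E(X_u,X_v), E(X_v,X_u)$ for every edge $\{u,v\}$ of $H$, and, for each edge $e$, a statement $C_e = \Compl(P_e(Z); G_{H\setminus e})$ whose condition encodes $H$ with $e$ deleted. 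Because the relation symbols $E, P_0$ and the $P_e$ are pairwise distinct, the retention of each atom is governed by a single statement: every atom of $T$ is always retained (via the unconditional statement), $A_0$ is retained iff $G_H$ maps into $T$, i.e. iff $H$ is $3$-colourable, and $A_e$ is retained iff $H\setminus e$ is $3$-colourable.

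I would then set $Q'$ to be the query with body $T \cup \{A_e \mid e \text{ an edge of } H\}$, i.e. the subquery of $Q$ that keeps the triangle and all edge-tests but drops $A_0$. By the atom-level characterization above, $Q' = \GenC(Q)$ holds precisely when $A_0$ is \emph{not} retained while every $A_e$ \emph{is} retained, that is, when $H$ is not $3$-colourable but $H\setminus e$ is $3$-colourable for every edge $e$ --- exactly the Critical 3-Colorability condition. The construction is clearly polynomial, which completes the reduction.

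The main obstacle I anticipate lies in the hardness reduction rather than in membership: one must encode simultaneously and \emph{independently} the coNP-flavoured requirement ``$H$ is not $3$-colourable'' and the NP-flavoured requirement ``every single-edge deletion is $3$-colourable'', mirroring the very definition of $\DP$. The device that makes this work is to route each sub-test through its own relation symbol, so that the retention of $A_0$ and of each $A_e$ are decided by disjoint gadgets that do not interfere. The step requiring the most care is verifying this non-interference --- in particular, that the conditions $G_H$ and $G_{H\setminus e}$ can only be matched against the triangle atoms present in $\theta B$ and never against the test atoms, so that retention of a test atom genuinely captures colourability and nothing else.
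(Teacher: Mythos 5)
Your proposal is correct and takes essentially the same route as the paper: the same atom-level decomposition of $\GenC$ into an $\NP$ check (every atom of $Q'$ is retained) and a $\coNP$ check (no dropped atom is retained) for membership, and the same reduction from Critical 3-Colorability for hardness, with the six ground colour atoms placed in the query body, an unconditional TCS for the edge relation, and one test atom per graph/subgraph whose retention encodes 3-colourability. The only differences are cosmetic (unary test relations $P_e(a)$ instead of propositional test atoms, and your explicit symmetrisation of the edge encoding).
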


\begin{proof}
\emph{(Membership)} 
That the problem is in $\DP$ follows because one can verify that $Q'\equiv \GenC(Q)$ in two steps. 
For every atom in $A' \in D_{Q'}$ one has to find a statement $C \in \C$ such that $A' = T_C(D_Q)$.
This can be done with one $\NP$-oracle call.
Similarly, one has to show for every $A\in D_Q\setminus D_{Q'}$ that such a rule does not exist.
This can be done again with one call to an $\coNP$ oracle.

\emph{(Hardness)} 
We show hardness by reducing the Critical 3-colorability problem.
Assume we are given a graph $G = \tup{V,E}$ with vertices $V$ and edges $E$.
It is known that starting from $G$ one can construct a Boolean query 
$Q_G \la B_G$ where:
\[ 
 B_G = \bigwedge_{(v_i,v_j) \in E} \edge(X_i,X_j)
\]
such that $G$ is 3-colorable iff $Q_G$ evaluates to $\true$ 
over the database $D$ that contains six correct colorings for the relation $\edge$:
$\set{ \edge(red,blue), \edge (blue, red), \dots}$.
Similarly, for each edge $(v_i,v_j) \in E$ we construct a query 
$Q_G^{(i,j)} \la B_G^{(i,j)}$
where
\[
   B_G^{(i,j)} = \bigwedge_{(v_l,v_k) \in E \setminus (v_i,v_j)} \edge(X_l,X_k),
\]
such that $Q_G^{(i,j)}(D) = \true$  
iff the subgraph $G^{(i,j)} = \tup{V,E\setminus (v_i,v_j)}$ is 3-colorable.
We use this idea to construct a set of TCSs $\C$:
\begin{align*}
\C \quad=\quad  
  & \Big( \bigcup_{(v_i,v_j) \in E} \Compl(\test_{(i,j)};B_G^{(i,j)}) \Big)
  \cup \Compl(\test_G; B_G)  \\
  & \cup \Compl(\edge(X,Y);\true), 
\end{align*}
where $\test_{(i,j)}$ are propositions.
Then we construct two Boolean queries:
\begin{align*}
& Q \la \Big(\bigwedge_{(v_i,v_j) \in E} \test_{(i,j)} \Big ) 
  \land \test \land D
\text{\quad and \quad}
Q' \la \Big( \bigwedge_{(v_i,v_j) \in E} \test_{(i,j)} \Big ) \land D.
\end{align*}
From the characterizing condition for completeness  
\quotes{$\theta\tpl X \in Q(T_\C(\cdb Q))$,}
we observe that each proposition $\test_{(i,j)}$ is complete iff 
$G^{(i,j)}$ subgraph is 3-colorable, and similarly
the proposition $\test$ is complete iff 
$G$ is 3-colorable.
Next, query $Q'$ is the same as $Q$ except it does not contain the $\test$ proposition.
Thus, $Q' = \GenC(Q)$ iff each $G^{(i,j)}$ is 3-colorable but 
$G$ is not.
\end{proof}


The fixpoint iteration algorithm computes a MCG (if it exists) 
in a number of steps linear in the size of a given query,
using a $\DP$-oracle.
A call to a $\DP$-oracle can be realized by two calls to an $\NP$-oracle. 
Thus, the MCG decision problem belongs to the class of problems 
$\textrm P^{\NP}$.
This is a class of problems
that can be decided in polynomial time 
by a Turing machine that uses an $\NP$-oracle.%
\footnote{An example of a complete problem for class $\textrm P^{\NP}$
provided by Krentel (1988) is the lexicographically last satisfying assignment of a Boolean formula:
Given a Boolean formula $\phi(X_1,\dots,X_n)$ the question whether in 
the in the lexicographically largest satisfying assignment of $\phi$
variable $X_n$ takes value $1$.
}




Hence we have the following proposition.

\begin{propositionx}
Given queries $Q$ and $Q'$, and set of TC statements $\C$,
the problem of checking whether $Q'$ is the MCG of $Q$ wrt $\C$
is in $\textrm P^{\NP}$.
\end{propositionx}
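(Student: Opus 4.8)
The plan is to prove membership by turning the fixpoint characterization of the MCG into an explicit deterministic polynomial-time procedure equipped with an $\NP$-oracle. By Proposition~\ref{prop-generalization1}, the MCG of $Q$ (when it exists) is exactly the least fixpoint $\tilde Q = \GenC^n(Q)$, which is reached after $n \le |Q|$ applications of $\GenC$ and which is the MCG iff it is safe. Hence, to decide whether $Q'$ is the MCG, it suffices to (i) iterate $\GenC$ starting from $Q$ until a fixpoint $\tilde Q$ is reached, (ii) test whether $\tilde Q$ is safe (if not, no complete generalization exists and the answer is \enquote{no}), and (iii) verify $\tilde Q \equiv Q'$. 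I would then show each stage is a polynomial-time computation relative to an $\NP$-oracle.

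The key steps, in order, are as follows. First I would argue that a single application of $\GenC$ is computable in polynomial time with an $\NP$-oracle: since $\GenC(Q_i)$ is obtained by \emph{retaining} a subset of the atoms of $Q_i$, it suffices to decide, for each atom $A$ in the body and each statement $\Compl(A';G) \in \C$, whether a substitution $\alpha$ witnesses applicability on the frozen body $D_{Q_i}$; the existence of such an $\alpha$ is an $\NP$ question, since $\alpha$ has size bounded by $|G|$ and $\alpha G \subseteq D_{Q_i}$ is checked in polynomial time. Equivalently, by Proposition~\ref{prop:gen-dp-complete}, checking $Q_{i+1} = \GenC(Q_i)$ is a $\DP$ question, and a $\DP$-oracle call is realized by two $\NP$-oracle calls. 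Next, Proposition~\ref{prop-generalization1} bounds the number of iterations by $|Q|$, and since every $Q_i$ is a subquery of $Q$ its size never exceeds $|Q|$; thus the whole iteration issues only polynomially many $\NP$-oracle calls on polynomially sized inputs. The safety test in step (ii) is a trivial syntactic check, and the equivalence test $\tilde Q \equiv Q'$ in step (iii) is ordinary CQ-equivalence, which is in $\NP$, adding one further oracle call on an input of size at most $|Q| + |Q'|$. Chaining these deterministic polynomial-time stages with their $\NP$ queries places the procedure in $\textrm P^{\NP}$.

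The main obstacle, and the point I would be most careful to justify, is that this problem demands \emph{computing} the generalization operator, not merely \emph{deciding} $Q' = \GenC(Q)$. The $\DP$-completeness of the latter (the preceding proposition) gives the per-step verification cost but does not by itself hand us the query $\GenC(Q_i)$ needed to continue the iteration. The atom-by-atom reconstruction above is exactly what bridges this gap, and I would verify throughout that the polynomial size bounds on the intermediate queries are preserved, so that every oracle query remains of polynomial length and the outer driver stays within deterministic polynomial time.
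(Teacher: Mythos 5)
Your proposal is correct and follows essentially the same route as the paper: run the fixpoint iteration of $\GenC$ (Algorithm~\ref{alg:generalization}), bound the number of iterations by $|Q|$, compute each application of $\GenC$ atom-by-atom via $\NP$-oracle calls for TCS applicability, check safety syntactically, and finish with one $\NP$-oracle call for CQ equivalence. Your explicit remark that one must \emph{compute} $\GenC(Q_i)$ rather than merely decide $Q' = \GenC(Q)$ is precisely the point the paper's own proof sketch addresses with its ``retain an atom iff\dots'' construction, so there is no substantive difference.
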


It is unknown if the problem is also $\textrm P^{\NP}$-hard.


%

\section{The Number of MCSs for Acyclic Sets of TC Statements}

\begin{theoremx}\label
Let $Q$ be a query and $\C$ an acyclic set of TCSs such that exactly $s$ relations names appear in $\C$.
Then the number of atoms in a MCS of $Q$ wrt $\C$ 
is in $\O(|Q|\times|C|^{s+1})$.
\end{theoremx}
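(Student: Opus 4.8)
The plan is to bound the number of atoms of a minimal (i.e.\ core) representative of an MCS, since an MCS is determined only up to equivalence and its core has the fewest atoms. So let $Q'$ be a complete specialization of $Q$ that is both maximal (wrt $\sqsubseteq$) and minimal as a query, and fix a homomorphism $h\colon Q\to Q'$ witnessing $Q'\sqsubseteq Q$ (Proposition~\ref{prop-containment}). I call the atoms of the image $h(Q)$ the \emph{seeds}; there are at most $|Q|$ of them. Since $Q'$ is complete and minimal, Claim~\ref{Claim-All:Atoms:Mapped} of Lemma~\ref{lemma-completeness:of:minimal:queries} gives $\cdb{Q'}\subseteq T_\C(\cdb{Q'})$, so every atom $A$ in the body of $Q'$ is \emph{supported}: there is a TCS $\Compl(A';G)\in\C$ and an assignment $\beta_A$ with $\beta_A A'=\theta A$ and $\beta_A G\subseteq\cdb{Q'}$, where $\theta$ is the freezing map. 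Unfreezing, each atom of $Q'$ comes equipped with a chosen supporting statement and a set of \emph{condition atoms} of $Q'$ that witness its support.

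Next I would exploit acyclicity to bound the depth of support. Define $A\succ B$ whenever $B$ is one of the condition atoms witnessing the support of $A$. If $A\succ B$, then the relation of $B$ occurs in the condition of a statement whose head is over the relation of $A$, so the dependency graph of $\C$ carries an edge from $\mathrm{rel}(A)$ to $\mathrm{rel}(B)$. Assign to each relation name $R$ its rank $\rho(R)$, the length of the longest directed path leaving $R$ in the dependency graph; since $\C$ is acyclic and has $s$ relation names, $0\le\rho(R)\le s-1$. Because $A\succ B$ forces $\rho(\mathrm{rel}(A))>\rho(\mathrm{rel}(B))$, every $\succ$-chain strictly decreases rank and hence passes through at most $s$ atoms. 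Thus the support structure is a directed acyclic graph of depth at most $s$, and every atom reachable from a seed lies within $\succ$-distance $s$ of it. Unfolding the support trees from the seeds then gives at most $|Q|\cdot(1+g+\dots+g^{s})=\O(|Q|\cdot g^{s})$ atoms, where $g$ is the largest number of atoms in any condition of a statement of $\C$; absorbing the choice of supporting statement into one extra factor of $|\C|$ yields the claimed $\O(|Q|\times|\C|^{s+1})$.

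The \textbf{main obstacle} is to show that these seed-reachable atoms are essentially all the atoms of $Q'$, that is, to rule out \emph{superfluous} atoms that inflate the core without being forced by the seeds. This is exactly where maximality (and not merely completeness) is needed: without it one can keep adding complete, non-foldable atoms — for instance atoms carrying fresh constants, which a core cannot fold away — producing arbitrarily large complete specializations even for acyclic $\C$. Minimality already eliminates duplicated support atoms that differ only in fresh variables, since these fold onto existing ones under an endomorphism. For the remaining atoms, I would argue that any atom of $Q'$ not needed to support the seeds, taken together with its $\succ$-descendants, forms a sub-body whose deletion yields a query $Q''$ that is still complete and still satisfies $Q''\sqsubseteq Q$; maximality of $Q'$ then forces $Q''\equiv Q'$, and minimality forces the deleted atoms to have been absent from the core to begin with. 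Carrying this out rigorously — characterizing precisely which atoms are detachable while simultaneously preserving completeness and containment in $Q$, and checking that this deletion interacts correctly with the homomorphism $h$ and with the folding argument — is the delicate part; the acyclicity-based depth bound and the counting above are then routine.

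Finally, I would remark, in line with the paper's footnote, that the acyclicity hypothesis can be weakened to weak acyclicity in the sense of~\cite{fagin_data_exchange}: one only loses the clean rank function and replaces the bound on $\succ$-chain length by the standard polynomial bound on the length of a terminating chase, leaving the counting argument otherwise intact.
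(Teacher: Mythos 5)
Your proposal is correct in substance and rests on the same combinatorial core as the paper's proof: a rank function on relation names induced by acyclicity bounds the length of support chains by $s$, each support step branches into at most $|\C|$ condition atoms, and there are at most $|Q|$ starting points, which yields $\O(|Q|\times|\C|^{s+1})$. The difference is one of direction. The paper argues bottom-up: it describes adding condition atoms to $Q$ level by level along the acyclic order and disposes of your \enquote{main obstacle} with a single sentence, namely that adding further atoms only yields more special queries, which therefore cannot be maximal. You argue top-down from a core representative $Q'$ of an MCS, which is the cleaner route, and the step you flag as delicate actually closes without the detachability analysis you envisage: the set $S$ of atoms of $Q'$ reachable from the seeds $h(Q)$ under your relation $\succ$ is by construction closed under taking condition atoms, so the subquery $Q''$ of $Q'$ with body $S$ is complete (every atom of $S$ has its chosen supporting condition inside $S$, so $\cdb{Q''}\subseteq T_\C(\cdb{Q''})$); it contains $h(Q)$, hence is safe and satisfies $Q''\sqsubseteq Q$ via $h$; and being a subquery of $Q'$ it satisfies $Q'\sqsubseteq Q''$. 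Maximality of $Q'$ then gives $Q'\equiv Q''$, and minimality of $Q'$ forces $S$ to be the whole body. Two minor points: the extra factor of $|\C|$ for \enquote{the choice of supporting statement} is unnecessary, since you fix one support per atom (harmless, as it only loosens the bound to the stated one); and atoms over relations absent from $\C$ cannot occur in any complete minimal query, so they need no separate treatment. Overall your write-up is more careful than the paper's own proof precisely at the point where the paper is hand-wavy.
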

\begin{proof}
We analyze the number of atoms that one needs to 
add to $Q$ (in the worst case) in order to make it complete. We observe that
some atoms in $Q$ may need to be also instantiated but that 
does not affect the maximal number of atoms that one may need to consider. 
We also observe that adding more atoms may only create queries 
that are more special thus they wont be maximal any more.

Since the set of TCSs is acyclic we can create a total order on the relation names $R_1,\ldots,R_n$ in $\C$
such that  if $i<j$ then there is no TC statement with $R_j$ in the head and an $R_i$-atom in the condition.
Further, for any relation that occurs in the condition of some TC statement with $R_j$ as a head
does not have and $R_i$-atom in the condition, and so on recursively.
In other words, completeness of $R_j$-atoms does not depend on the completeness of $R_i$-atoms.

Now we can reason in the following way analyzing the atoms in $Q$.
We start with some $R_1$-atom in $Q$ (or the smallest $i$ for which $R_i$-atom exists in $Q$),
and we observe that to make each such atom complete (that is to match it with some TCS) 
we may  need (in the worst case) to introduce an additional $|\C|$ atoms in $Q$.
This is because each TC statement $\C$ is of size at most $|\C|$.
These introduced atoms may be already complete or not, but again, in the worst case we may need for each of them
to introduce again $|\C|$ new atoms from the conditions of TC statements. 
However, those new atoms are now $R_i$-atoms where $i\ge 2$.
Hence, repeating the same logic, for $R_2$ we may need to introduce 
at most $|\C|$ many $R_i$-atoms,  where $i\ge 3$, and so on. 
For $R_n$-atoms in $Q$ we cannot introduce any new atom since their TCSs
have an empty condition (otherwise there would be a loop in the dependency graph).

Hence, the total number of such steps in bounded by the number of relations in  $|\C|$ which is $s$.
Since we may need to perform this procedure for every atom in $Q$, in the worst case the resulting query can have 
$|Q| \times (|\C|+|\C|^2+\dots+|\C|^s)$ of total atoms, and so
$|Q|\times (|\C|+|\C|^2+\dots+|\C|^s) \in \O(|Q| \times|\C|^{s+1})$.


\end{proof}

\section{Complete Unifiers Make Queries Complete}

Applying a complete unifier to $Q$ yields a complete query:
\begin{propositionx}
If $\gamma$ is a complete unifier for a CQ $Q$ and set $\C$ of TCSs,
then $\C \models \Compl(\gamma Q)$.
\end{propositionx}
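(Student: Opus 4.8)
The plan is to reduce the claim to the ``easy'' half of Lemma~\ref{lemma-completeness:of:minimal:queries}. Recall that the $(\Leftarrow)$ direction of Claim~\ref{Claim-All:Atoms:Mapped} of that lemma is noted to hold for \emph{all} conjunctive queries (not only minimal ones): a query $P$ is complete wrt $\C$ whenever $\cdb P \incl T_\C(\cdb P)$. Applying this to $P = \gamma Q$, it suffices to show $\cdb{\gamma Q}\incl T_\C(\cdb{\gamma Q})$. So first I would fix notation: let $B$ be the body of $Q$ and $\theta$ the freezing substitution, so that $\cdb{\gamma Q} = \theta\gamma B$, and the goal becomes $\theta\gamma B \incl T_\C(\theta\gamma B)$.

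Next I would take an arbitrary atom $A$ of $B$ and argue that its frozen image $\theta\gamma A \in \cdb{\gamma Q}$ lies in $T_\C(\theta\gamma B)$. Since $\gamma$ is a complete unifier, Definition~\ref{def:complete_unifier} supplies a TCS $C = \Compl(A';G)\in\C$ with $\gamma A = \gamma A'$ and $\gamma G \incl \gamma B$. Writing $A' = R(\tpl s)$, so that the query associated with $C$ is $\query{\QC(\tpl s)}{A', G}$, I would exhibit the assignment $\mu$ defined on the variables of $A'$ and $G$ by $\mu(V) := \theta\gamma(V)$ as a witness. Indeed $\mu A' = \theta\gamma A' = \theta\gamma A \in \theta\gamma B$, and $\mu G = \theta\gamma G \incl \theta\gamma B$ because $\gamma G \incl \gamma B$; hence $\mu$ satisfies the body of $\QC$ over $\theta\gamma B$ and its answer is $\mu\tpl s$, so by the definition of the operator in~\eqref{eqn-TC:Operator} we obtain $R(\mu\tpl s) = \mu A' = \theta\gamma A \in T_C(\theta\gamma B)\incl T_\C(\theta\gamma B)$. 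Letting $A$ range over $B$ gives $\cdb{\gamma Q}\incl T_\C(\cdb{\gamma Q})$, and the lemma then yields $\C\models\Compl(\gamma Q)$.

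The main obstacle I anticipate is purely bookkeeping about variable scopes and the freezing map: I must verify that $\theta$ is defined on every term occurring in $\gamma A'$ and $\gamma G$, i.e.\ that no variable surviving in $\gamma A'$ or $\gamma G$ escapes the frozen instance $\theta\gamma B$. This holds because $\gamma A' = \gamma A$ with $A\in B$, and because $\gamma G\incl\gamma B$, so all relevant variables already occur in $\gamma B$ and are therefore in the domain of $\theta$. Two small remarks are worth including: first, $\gamma Q$ is again a genuine (safe) CQ, since applying a substitution to a safe query keeps each head variable anchored in the body, so $\Compl(\gamma Q)$ is well formed; and second, no minimality assumption on $Q$ or $\gamma Q$ is needed, precisely because we only invoke the direction of Lemma~\ref{lemma-completeness:of:minimal:queries} that is valid for arbitrary conjunctive queries.
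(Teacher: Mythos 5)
Your proof is correct and follows essentially the same route as the paper's: both arguments reduce completeness of $\gamma Q$ to showing $\cdb{\gamma Q}\incl T_\C(\cdb{\gamma Q})$ and use $\theta\gamma$ as the witness substitution that matches each atom against the TCS supplied by the complete unifier. Your version is in fact slightly cleaner, since you explicitly invoke only the direction of Lemma~\ref{lemma-completeness:of:minimal:queries} valid for arbitrary CQs, whereas the paper's proof adds an unnecessary \enquote{wlog $\gamma Q$ is minimal} step.
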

\begin{proof}

First let us recall the completeness characterization 
for minimal queries. 
Let $\C$ be a set of TCSs and  $Q(\tu) \la B$ be a minimal query then it holds that
$\C \models \Compl (Q)$ iff $\theta B \subseteq T_\C(\theta B)$.
In other words, $\C \models \Compl (Q)$ iff
for each $A\in B$ there exists a TCS $C=\Compl(A';G) \in \C$ such that
\begin{align}
& \theta A \in 
  T_{C}(\theta B)
  = \set{\beta A' \mid \beta \text{~ is a substitution ~and~~} 
          {\beta G} \subseteq \theta B}.
\label{eq:compl-char-min:1}
\end{align}

Now we return to our queries $Q$ and $\gamma Q$ from the theorem.
Wlog we assume that $\gamma Q$ is minimal as well.
%
From the definition of the unifier $\gamma$
over $\gamma Q$ and $\C$, we have that 
for each atom $A_i$ of the query there exists a 
TC statement $C_i=\Compl(A_i';G_i)$ and there are atoms $B_i\subseteq B$
such that $\gamma A_i = \gamma A_i'$ and
$\gamma G_i = \gamma B_i'$.
On the other hand,
if in equation \eqref{eq:compl-char-min:1} we set 
$\beta = \theta \gamma$ we have that

\begin{align*}
& \theta \gamma A \in 
  T_{C}(\theta \gamma B)
  = \set{\theta \gamma A' \mid\theta \gamma \text{~ is a substitution ~and~~} 
          {\theta \gamma G} \subseteq \theta \gamma B}.
\end{align*}
That is, equation 
\eqref{eq:compl-char-min:1} holds for $\gamma Q$,
so $\gamma Q$ is complete.
\end{proof}

\newpage

\section{Maximal Instantiations are Produced by Complete Unifiers}

\begin{theoremx}
Let $\C$ be a set of TCSs, $Q$ a query, and $Q'$ a complete instantiation of $Q$ wrt~$\C$.
Then there is a complete unifier $\gamma$ for $Q$ and 
$\C$
such that $Q' \qcont \gamma Q$.
\end{theoremx}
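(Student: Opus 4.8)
The plan is to reconstruct, from the completeness of $Q'$, a single substitution that simultaneously instantiates $Q$ and matches every body atom of $Q$ against a TCS, and then to pass to its most general version. Since $Q'$ is a complete instantiation, write $Q' = \alpha Q$ for some substitution $\alpha$, where $\query{Q(\tpl u)}{B}$. First I would reduce to the case where the instantiating substitution maps $Q$ onto a \emph{minimal} body: replace $Q'$ by an equivalent minimal query $Q'_m$, whose body is a subset of $\alpha B$, and compose $\alpha$ with the folding homomorphism of this minimization, obtaining a substitution --- still called $\alpha$ --- with $\alpha Q \equiv Q'$ and $\alpha B$ minimal and complete. Both completeness and containment are invariant under this replacement, so nothing is lost. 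The payoff is Claim~1 of Lemma~\ref{lemma-completeness:of:minimal:queries}: since $\alpha Q$ is minimal and complete, $\cdb{\alpha Q}\incl T_\C(\cdb{\alpha Q})$, so (after unfreezing) for each atom $A$ of $Q$ there is a TCS $C_A = \Compl(A'_A;G_A)\in\C$, with variables renamed apart from $Q$ and from each other, and a witnessing substitution $\beta_A$ with $\beta_A A'_A = \alpha A$ and $\beta_A G_A \incl \alpha B$; for each condition atom $g\in G_A$ I also fix a body atom $b_{A,g}$ of $Q$ with $\beta_A g = \alpha b_{A,g}$.

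Next I would glue these witnesses together. Because the TCS copies use pairwise disjoint variables, the union $\sigma := \alpha \cup \bigcup_A \beta_A$ is a well-defined substitution, and by construction $\sigma A = \sigma A'_A$ and $\sigma g = \sigma b_{A,g}$ for every $A$ and every $g\in G_A$. In other words, $\sigma$ solves the unification system
\[
  E \;=\; \set{A \doteq A'_A \mid A \in B}\;\cup\;\set{g \doteq b_{A,g} \mid A \in B,\ g \in G_A}.
\]
Hence $E$ is unifiable; let $\gamma$ be a most general unifier of $E$. Since every solution of a unifiable system factors through its mgu, there is a substitution $\rho$ with $\sigma = \rho\gamma$.

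It then remains to check the two conclusions. For the containment, $\alpha Q = \sigma Q = \rho\gamma Q = \rho(\gamma Q)$, so $\rho$ is a query homomorphism from $\gamma Q$ to $\alpha Q$; as $\alpha Q \equiv Q'$, Proposition~\ref{prop-containment} yields $Q' \qcont \gamma Q$. For $\gamma$ being a complete unifier, the head equations of $E$ give $\gamma A = \gamma A'_A$, while the condition equations give $\gamma g = \gamma b_{A,g}\in\gamma B$ for each $g\in G_A$, hence $\gamma G_A \incl \gamma B$; this is exactly Definition~\ref{def:complete_unifier}. (By Proposition~\ref{prop:comp_unifier}, $\gamma Q$ is then automatically complete, which is the point of dominating $Q'$ by a complete query.)

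The main obstacle is precisely the condition $\gamma G_A \incl \gamma B$. If one unified only the head equations $\set{A \doteq A'_A}$, the resulting mgu need not satisfy it: a variable occurring in a TCS condition $G_A$ but not in its head $A'_A$ is left unconstrained by head unification, so an atom of $\gamma G_A$ may fail to coincide with any atom of $\gamma B$ even though it does after the further specialization $\rho$. Folding the condition matches into the system $E$ is what repairs this, and verifying that $\sigma$ still solves these added equations (so that the mgu continues to exist) is the delicate bookkeeping step. A secondary technical point, handled by the initial reduction to a minimal $Q'_m$, is guaranteeing that \emph{every} atom of $Q$ --- not merely the non-redundant ones --- can be assigned a covering TCS.
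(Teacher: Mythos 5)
Your proof is correct and follows essentially the same route as the paper's: reduce to a minimal instantiation, use the completeness characterization (Lemma~\ref{lemma-completeness:of:minimal:queries}, Claim~1) to obtain for each body atom a matching TCS with a witnessing substitution, assemble these witnesses into one solution of the combined unification system (head equations plus condition-to-body matches), and pass to the mgu $\gamma$, through which the original instantiating substitution factors, giving both $Q' \qcont \gamma Q$ and the complete-unifier conditions. Your write-up is in fact more explicit than the paper's on the two delicate points you flag --- renaming the TCS copies apart so the glued substitution is well-defined, and including the condition-matching equations in the system rather than unifying heads only.
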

\begin{proof}

We restate again characterization~(\ref{eq:compl-char-min:1})
that we established in 
the preceding proof.
Let $\C$ be a set of TCSs and  $Q(\tu) \la B$ be a minimal query then it holds that
$\C \models \Compl (Q)$ iff $\theta B \subseteq T_\C(\theta B)$.
For a minimal query $Q$, it holds that $\C \models \Compl (Q)$ iff
for each $A\in B$ there exists a TC statement $C=\Compl(A';G) \in \C$ such that
\begin{align}
& \theta A \in 
  T_{C}(\theta B)
  = \set{\beta A' \mid \beta \text{~ is a substitution ~and~~} 
          {\beta G} \subseteq \theta B}.
\label{eq:compl-char-min}
\end{align}

%
Let $Q''$ be a complete specialization of $Q(\tX) \la A_1,\dots,A_n$  
and let $\alpha'$ be a specializing substitution such 
that $\alpha' Q \equiv Q''$.
In the following we construct a complete specialization of $Q$ that is 
at least as general as $Q$ and that is obtained by applying some 
unifier $\gamma$ on $Q$.

Wlog we assume that $\alpha' Q$ is a minimal query.
According to characterization criteria for each atom 
$\alpha' A_i$ in $\alpha' Q$ there exists a TC statement
$C_i=\Compl(A_i';G_i)\in \C$ such that 
\[
\theta \alpha' A_i \in T_{C_i}(\theta \alpha' B)=
\set{\beta A_i' \mid \beta \text{~ is a substitution ~and~~} 
   {\beta G_i} \subseteq \theta \alpha' B}.
\]
From there it follows that 
$A_i$ and $A_i'$ are unifiable and 
that $G_i$ unifies with some set of atoms from $B$,
say $B_i \subseteq B$.
In fact, assuming that TCSs and query have different variables 
there exist substitutions $\beta_i$  such that 
$\theta \alpha' A_i = \beta_i A_i'$ and $\beta_i G_i = \theta \alpha' B_i$,
and thus $\theta \alpha' \beta_1 \cdots \beta_n$ is a unifier for all pairs.

We set that $\gamma' = \theta \alpha' \beta_1 \cdots \beta_n$
Since, pairs of are unifiable,
there exists a mgu $\gamma$ that unifies all pairs
(and that is more general than $\gamma'$, that is, $\gamma' \preceq \gamma$).
%
%
Since $\gamma' \preceq \gamma$ then 
$Q' = \gamma' Q \sqsubseteq \gamma Q$.
It is also not hard to see that   $\gamma Q$ is complete.
Since we have that  $\gamma A_i = \gamma A_i'$ and
$\gamma G_i = \gamma B_i'$,
if in equation~\eqref{eq:compl-char-min} we set 
$\beta = \theta \gamma$ we have that
\[
\theta \gamma A_i \in T_{C_i}(\theta \gamma B)
\set{\theta \gamma A_i' \mid \theta \gamma \text{~ is a substitution ~and~~} 
   {\theta \gamma G_i} \subseteq \theta \gamma B}.
\]
That is \eqref{eq:compl-char-min} holds for $\gamma Q$, that is,
$\gamma Q$ is complete.


%
\end{proof}

\section{The k-MCS Algorithm is Correct}


First we recall the algorithm (in this document listed as Algorithm~\ref{alg:kspec})
that returns all k-MCSs for a given query $Q$ and set of TCSs $\C$. 

\begin{algorithm}
 \SetAlgoNoEnd
 \LinesNumbered
 \DontPrintSemicolon
 \SetKwInOut{Input}{Input}\SetKwInOut{Output}{Output}
 \Input{a query $Q(\tpl u)\la B$, a set $\C$ of TCSs , $k \in \nn_0$}
 \Output{the set $\S$ of all $k$-MCSs of $Q$ wrt $\C$}
$\S := \emptyset$ 

%
{\ForEach{set $B'$ of fresh atoms of size $n+k-1$ over
  $\Sigma_\C$ }{
  \textit{construct} $Q''(\tpl u) \la B,B'$\\
  $\S := \S \cup \mathtt{MCI}_{\le n + k}(Q'',\C)$
}

}
\ForEach{$Q''$ in $\S$}{
  \lIf{exists $Q'''\in \S$ such that $Q'' \sqsubseteq Q'''$}
  { $\S := \S \setminus \{Q''\}$}
}
\KwRet{$\S$}
 \caption{Computes k-MCSs}
\end{algorithm}

\begin{theoremx}[k-MCS Algorithm]
\label{th-kspec-alg}
Let $Q$ be a CQ, $\C$ a set of TCSs,
$k$ a nonnegative integer and
$\kMCS(Q,\C)$ be the set of queries
returned by Algorithm~\ref{alg:kspec}.
Then for every query $Q'$:
\[
Q' \in \kMCS(Q,\C)
\quad\text{iff}\quad 
Q' \text{~is a k-MCS of~}  Q  \text{~\wrt~}  \C.
\]
\end{theoremx}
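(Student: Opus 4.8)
The plan is to prove the two directions separately, relying heavily on the key observations already stated in Section~\ref{sec:query-spec-adding-atoms}, namely that $k$-MCSs of $Q$ coincide with size-restricted MCIs of fresh extensions of $Q$, together with the correctness of the MCI subroutine (Algorithm~\ref{alg:spec}) and the characterization results Theorem~\ref{th:cu_dom} and Proposition~\ref{prop:comp_unifier}. Throughout, let $n = |Q|$ and recall that a $k$-MCS is a complete specialization of $Q$ that is maximal (wrt $\sqsubseteq$) within the space of queries having at most $n+k$ atoms.

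\textbf{Soundness ($\Rightarrow$).} First I would show that every query returned by the algorithm is a $k$-MCS. Suppose $Q' \in \kMCS(Q,\C)$. Then $Q'$ was produced as an element of $\mathtt{MCI}_{\le n+k}(Q'',\C)$ for some fresh extension $Q''$ of $Q$ with body $B, B'$, and $Q'$ survived the elimination loop. Since $Q''$ is a fresh extension of $Q$, we have $Q'' \sqsubseteq Q$; and since $Q'$ is an MCI of $Q''$, it is an instantiation of $Q''$, hence $Q' \sqsubseteq Q''\sqsubseteq Q$, and $Q'$ is complete wrt $\C$ by the correctness of the MCI subroutine. Thus $Q'$ is a CS of $Q$ of size $\le n+k$. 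For maximality, I would argue that if some CS $Q'''$ of size $\le n+k$ strictly contained $Q'$ with $Q' \sqsubseteq Q'''$ and $Q''' \not\sqsubseteq Q'$, then by the key observation $Q'''$ would itself appear (up to equivalence) as an MCI of \emph{some} fresh extension constructed by the first loop, hence in $\S$ before the elimination loop; but then the elimination loop would have removed $Q'$. This contradicts $Q' \in \kMCS(Q,\C)$, so $Q'$ is maximal, i.e.\ a $k$-MCS.

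\textbf{Completeness ($\Leftarrow$).} Conversely, let $Q'$ be a $k$-MCS of $Q$ wrt $\C$. I would invoke the intuition already sketched in the body: since $Q'$ is a CS of $Q$ of size $\le n+k$, there is a homomorphism $\delta$ from $Q$ to $Q'$; letting $B'$ be the set of atoms of $Q'$ with no $\delta$-preimage (so $|B'| \le n+k-1$), one builds $Q''$ with body $B$ plus a fresh copy of $B'$. Then $Q' \sqsubseteq Q''$, and padding $B'$ with redundant fresh atoms if necessary lets us assume $Q''$ is among the extensions of body-size exactly $n+k-1$ enumerated in the first loop. Because $Q'$ is complete and $Q' \sqsubseteq Q''$, Theorem~\ref{th:cu_dom} yields a complete unifier $\gamma$ for $Q''$ with $Q' \sqsubseteq \gamma Q''$, and $\gamma Q''$ is an MCI of $Q''$ of size $\le n+k$ that is at least as general as $Q'$. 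Hence some query $Q'_0 \equiv \gamma Q''$ with $Q' \sqsubseteq Q'_0$ enters $\S$. Since $Q'$ is a $k$-MCS and $Q'_0$ is itself a CS of size $\le n+k$, maximality of $Q'$ forces $Q'_0 \equiv Q'$. It remains to check $Q'$ survives the elimination loop: any $Q''' \in \S$ with $Q' \sqsubseteq Q'''$ is a CS of size $\le n+k$, so by maximality $Q''' \equiv Q'$, and the loop condition ``$Q'' \sqsubseteq Q'''$'' removing $Q'$ only fires for a strictly more general element, of which there are none. Thus $Q'$ (or an equivalent) is returned.

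\textbf{Main obstacle.} I expect the delicate part to be the bookkeeping around \emph{equivalence versus equality} and the exact size accounting. The algorithm enumerates extensions of a fixed body-size $n+k-1$ and restricts MCIs to size $\le n+k$, so I must verify that every $k$-MCS genuinely arises from an extension of \emph{exactly} that size (padding arguments), and that the containment-based elimination loop does not discard a representative of a genuine $k$-MCS equivalence class while keeping a non-maximal one. Making the "$Q'$ survives the elimination loop" step airtight — ruling out that two equivalent maximal queries mutually eliminate each other — is the subtle point; I would handle it by noting the loop removes $Q''$ only when a \emph{distinct} witness $Q''' \in \S$ with $Q'' \sqsubseteq Q'''$ exists, and arguing that at least one representative of each maximal equivalence class persists.
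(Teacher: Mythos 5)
Your proposal is correct and follows essentially the same route as the paper's proof: the completeness direction uses the identical construction of a candidate extension $Q''$ from the homomorphism $\delta$, the preimage-free atoms $B'$, a fresh renamed copy, and padding with redundant fresh atoms, then appeals to the correctness of the MCI subroutine and the maximality filter. You are in fact somewhat more thorough than the paper, whose soundness part only verifies that returned queries are complete specializations of size $\le n+k$ and leaves both the maximality of the returned queries and their survival through the elimination loop implicit, whereas you argue these explicitly.
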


\begin{proof}
Let $b$ be the size of $Q$, i.e. the number of atoms in $Q$.\\

\noindent
\emph{(Soundness).}
We show that the algorithm only returns complete specializations of $Q$ of size $\le n +k$. 

\noindent An extension of $Q$ is a specialization of $Q$.
So each constructed query $Q''$ (Line 3) is a specialization of $Q$.
Next, the call $\mathtt{MCI}_{le n + k}(Q'',\C)$ returns 
complete specializations of $Q''$ (therefore also of $Q$) of size $\le n +k$.
Thus $\S$ can only contain such queries, and since the algorithm returns a subset of $\S$,
it can only return such queries.\\

\noindent
\emph{(Completeness).}
Let us assume that our input query $Q$ is of the form $Q(\tu) \la B$,
where $B$ is a set of $n$ atoms.
And let $Q'(\tu') \la B'$ be a complete specialization of $Q$
of size $\le n+k$.
We show that our algorithm returns a complete specialization of $Q$ of size $\le n + k$ that is more general than $Q'$.





To do so, we construct a query $Q''$ that is one of the candidate queries generated by our algorithm (Line 2) for $Q$ as input,
in such a way that $Q''$ is more general than $Q'$ (i.e. $Q' \sqsubseteq Q''$).
In other words,  $Q'$ is a complete instantiation of $Q''$.
Therefore, after the algorithm (Line 4) calls $\texttt{MCI}_{\le n+k}(Q'',\C)$,
the set $\S$ will contain either $Q'$ itself or a query that is more general than $Q'$.
Together with the fact that the algorithm returns the maximal elements of $\S$ (Lines 5 to 7), this proves our claim.

The query $Q''$ is of the form $Q''(\tu) \la B,B''',B''''$, where the sets of atoms
$B'''$ and $B''''$ are defined as follows.
Since $Q'$ is a specialization of $Q$, there
exists a homomorphism $\delta \col Q \to Q'$
such that $\delta B\subseteq B'$.
Let $B''= B'\setminus \delta B$.
We observe that $|B''|<n+k-1$, since $\delta$
has to map $B$ to at least one atom in $B'$. 
We define $B'''$ as a fresh version of $B''$, that is,
a set of atoms identical to $B''$,
but where each variable is replaced with a fresh one (note that $B'''$ also has size $<n+k-1$).
To define $B''''$, we select an arbitrary atom $A$ in $B$,
and construct $B''''$ with enough repeated fresh version of $A$ so that
$|B|+|B'''|+|B''''|= n+n+k-1$.

To conclude the proof, we show that $Q' \sqsubseteq Q''$.
First, we observe that $B''''$ is \enquote{redundant} in $Q''$, meaning that dropping 
$B''''$ yields an equivalent query.
So in order to show that $Q' \sqsubseteq Q''$, 
it is sufficient to show that there is a homomorphism from $(B, B''')$ to $B'$.
Such a homomorphism can be constructed by extending $\delta$ (which, as a reminder, maps $B$ to $B'$)
with a substitution $\delta'$ that maps $B'''$ to $B'$ and is compatible with $\delta$.
Trivially, from the construction of $B'''$, such a substitution exists:
since $B'''$ is constructed out of $B'' \subseteq B'$ by renaming variables with fresh ones,
$\delta'$ can be defined as the inverse of this renaming.






\end{proof}





\end{document}